\newtheorem{theorem}{Theorem}[section]
\newtheorem{corollary}{Corollary}[section]
\newtheorem{definition}{Definition}[section]
\newtheorem{remark}{Remark}[section]
\title{Persistence and Extinction Dynamics in a Stochastic Predator-Prey Model with Emergent Allee Effects

}
\author{
  Carlos Granados\\
  Institute of Mathematics\\
  University of Antioquia \\
  Medellin-Colombia\\
  \texttt{carlosgranadosortiz@outlook.es} \\
   \And
  Leon A. Valencia \\
  Institute of Mathematics\\
  University of Antioquia \\
  Medellin-Colombia\\
  \texttt{lalexander.valencia@udea.edu.co} \\
}
\begin{document}
\maketitle

\begin{abstract}
The Allee effect describes a decline in population fitness at low densities, potentially leading to extinction. In predator-prey systems, an emergent Allee effect can arise due to interactions such as density-dependent maturation rates and predation constraints. This work studies a stochastic predator-prey model where the prey population is structured into juvenile and adult stages, with maturation following a nonlinear function. We introduce Itô-type stochastic perturbations in mortality rates to account for environmental variability. We first establish the positivity of solutions and derive sufficient conditions for the stability of the trivial equilibrium, prey extinction, and conditional predator extinction. We then analyze prey persistence under specific maturation rate functions. Finally, numerical simulations illustrate the theoretical results and their ecological implications.
\end{abstract}

\keywords{Allee effect, Stochastic differential equation, Persistence, Stochastic stability, Predator-prey model}

\section{Introduction}

The Allee effect is a fundamental concept in population dynamics, describing a phenomenon where individuals of a species experience reduced biological fitness at low population densities, leading to a critical threshold below which the population faces an increased risk of extinction \cite{allee1931, courchamp2008}. This effect can arise from various ecological mechanisms, including difficulties in finding mates, reduced cooperation in foraging or defense, and inbreeding depression \cite{stephens1999}. 

In his seminal work of 1931, W.C. Allee provided a conceptual and empirical foundation for this phenomenon, observing that animal populations become more vulnerable to extinction at low densities. However, it was not until later that the Allee effect was formalized mathematically using ordinary differential equations. One of the most widely recognized models capturing this effect takes the form:

\[
\frac{dN}{dt} = rN\left(1 - \frac{N}{K}\right) - \frac{aN^2}{N + b},
\]

where \( N \) represents the population density, \( r \) is the intrinsic growth rate, \( K \) is the carrying capacity, \( a \) governs the intensity of the Allee effect, and \( b \) modulates its influence. The term \( \frac{aN^2}{N + b} \) introduces the Allee effect by reducing the population growth rate at low densities, reflecting challenges such as mate finding or resource acquisition that can drive populations toward decline or extinction.

An emergent Allee effect occurs when interactions between species produce population dynamics that exhibit Allee-like characteristics under specific conditions. For instance, in predator-prey systems, such an effect can arise when:

\begin{itemize}
    \item The prey's reproduction rate depends positively on population density, as higher densities may enhance defense mechanisms, foraging efficiency, or communal behavior.
    \item Predators exhibit higher hunting efficiency when prey are clustered or exhibit communal behaviors that facilitate capture.
\end{itemize}

These dynamics can lead to pronounced cycles of abundance and scarcity in both predator and prey populations, potentially resulting in the extinction of one or both species. 

Of particular interest is the emergent Allee effect in predator-prey models, which has been extensively studied using ordinary and stochastic differential equations \cite{Kong2024, Xue2024, sen2021bifurcation, mulugeta2022bifurcation, stochastic_allee_extinction, stochastic_allee_dynamics, stochastic_allee_levy, stochastic_allee_holling, stochastic_allee_herd, Pimentel2020, vanKooten2005,bistability_allee_stage_predation}. This work is inspired by the deterministic model proposed in \cite{vanKooten2005}, which divides the prey's life history into juvenile and adult stages. The prey population is regulated through maturation, with predators feeding exclusively on adult prey. This framework captures the essential elements for the emergence of an Allee effect, as detailed in \cite{DeRoos2002}. The model is described by the following system of equations:

\begin{align*}
    \frac{dx}{dt} &= \beta y - \frac{x}{1 + x^2} - \mu_x x, \\
    \frac{dy}{dt} &= \frac{x}{1 + x^2} - \mu_y y - y z, \\
    \frac{dz}{dt} &= z(\alpha y - \mu_z),
\end{align*}

where \( z \) represents the predator population, \( y \) denotes adult prey, and \( x \) corresponds to juvenile prey. The maturation rate of juveniles is given by \( \frac{1}{1 + x^2} \), which decreases as the density of the juvenile population increases. The parameter \( \beta \) represents the fertility rate, while \( \alpha \) quantifies the efficiency with which the consumed prey biomass is converted to predator biomass ($\alpha\in(0,1)$). The mortality rates for juveniles, adult prey and predators are denoted by \( \mu_x \), \( \mu_y \), and \( \mu_z \), respectively.

In this study, we generalize the maturation rate to a function \( \varphi: \mathbb{R}_{\geq 0} \rightarrow \mathbb{R}_{\geq 0} \) that satisfies the following conditions:

\begin{enumerate}
    \item \( \varphi \) is differentiable on \( (0, \infty) \),
    \item \( \lim_{x \to \infty} \varphi(x) = 0 \), and
    \item \( \sup_{x \in \mathbb{R}_{\geq 0}} \varphi(x) \leq 1 \).
\end{enumerate}

To better reflect the stochastic nature of ecological systems, we introduce stochastic perturbations of Itô-type into mortality rates \( \mu_x \), \( \mu_y \), and \( \mu_z \). These perturbations account for environmental variability, demographic noise, and other stochastic factors, modeled as follows:

\begin{align*}
    \mu_x &\mapsto \mu_x + \sigma_x dB_x(t), \\
    \mu_y &\mapsto \mu_y + \sigma_y dB_y(t), \\
    \mu_z &\mapsto \mu_z + \sigma_z dB_z(t),
\end{align*}

where \( B_x \), \( B_y \), and \( B_z \) are independent Brownian motions, and \( \sigma_x \), \( \sigma_y \), and \( \sigma_z \) represent the noise intensity coefficients. The resulting stochastic model is given by:

\begin{align}\label{themodel}
    dx &= \left(\beta y - \varphi(x)x - \mu_x x\right) dt - \sigma_x x dB_x(t), \nonumber \\
    dy &= \left(\varphi(x)x - \mu_y y - y z\right) dt - \sigma_y y dB_y(t), \\
    dz &= z(\alpha y - \mu_z) dt - \sigma_z z dB_z(t). \nonumber
\end{align}

We begin by establishing the positivity of the solutions, demonstrating that the system \eqref{themodel} remains within the domain of positive population densities for all time with probability 1. Subsequently, we derive sufficient conditions for the stability of the origin, the asymptotic stability of the origin, the extinction of the prey, and the conditional extinction of the predator.

A central focus of this work is on the analysis of prey persistence. To this end, we consider the specific case \( \varphi(x) = \frac{\kappa}{1 + x} \), where \( \kappa \) represents the maturation rate at low densities. This choice is motivated by observations in \cite{bistability_allee_stage_predation}, where the population density of cod in the North Atlantic remained low despite fishing restrictions. We model predation constraints by imposing the condition \( \{\sup_{t \geq 0} z(t) \leq M\} \), aiming to identify conditions under which prey persistence is ensured with limited predation.

The remainder of this article is organized as follows. 

\begin{itemize}
    \item In Section \ref{sec:preliminares}, we present the necessary definitions and theoretical foundations for the study.
    \item In Section \ref{sec:main results}, we establish sufficient conditions for the stability of the origin, the extinction of the prey, and the conditional extinction of the predator.
    \item In Section \ref{sec:persistencia}, we analyze prey persistence and derive conditions under which extinction is avoided.
    \item In Section \ref{sec:simulaciones}, we provide numerical simulations to illustrate the theoretical results, make comments on the theorems, and discuss their biological implications.   
    
\end{itemize}

\section{Preliminaries}\label{sec:preliminares}

Throughout this paper, let $\mathbb{R}^n_{\geq 0}:= \{  \mathbf{x}\in \mathbb{R}^n: x_i\geq 0, \forall i=1,\ldots,n\}$ and $\mathbb{R}^n_{> 0}:= \{  \mathbf{x}\in \mathbb{R}^n: x_i> 0, \forall i=1,\ldots,n\}$.

Consider the following Itô-type stochastic differential equation:

\begin{equation}\label{ede}
    dX(t) = f(X(t), t) \, dt + g(X(t), t) \, dB(t),
\end{equation}

where \( f: \mathbb{R}^n \times \mathbb{R}_{\geq 0} \rightarrow \mathbb{R}^n \) and \( g: \mathbb{R}^n \times \mathbb{R}_{\geq 0} \rightarrow M_{n \times m} \) are Borel-measurable functions, with \( M_{n \times m} \) being the space of \( n \times m \) matrices with real entries, and \( \{B(t)\} \) an \( m \)-dimensional Brownian motion.

The infinitesimal generator operator \( \mathcal{L} \) associated with \eqref{ede}, whose domain consists of smooth functions, takes the form:

\begin{definition}\label{def1}
    A point $\mathbf{x}=\mathbf{0}\in\mathbb{R}^n$ is said to be an equilibrium point of the system \eqref{ede} if $f(\mathbf{0},t)=\mathbf{0}$ and $g(\mathbf{0},t)=\mathbf{0}$ for all $t\in \mathbb{R}_{\geq 0}$. 
\end{definition}

\begin{definition}\label{def2}
    Let $\mathbf{x}=\mathbf{0}$ be an equilibrium point of the system \eqref{ede}. This point is said to be stable in probability if for every $\epsilon>0$ and $r>0$, there exists $\delta=\delta(\epsilon,r)>0$ such that if $\|\mathbf{x}(0)\|<\delta$, then 
    
    $$\mathbb{P}(\|\mathbf{x}(t)\|<r, \forall t\in \mathbb{R}_{\geq 0})\geq 1-\epsilon.$$
\end{definition}

Since the stochastic differential equation of our interest models a population growth phenomenon, the point $\mathbf{x}=\mathbf{0}$ represents the absence of species (extinction). Thus, Definition \eqref{def2} intuitively tells us that the system will remain within thresholds close to extinction with high probability, provided it starts with low population densities.

\begin{definition}\label{def3}
    Let $\mathbf{x}=\mathbf{0}$ be an equilibrium point of the system \eqref{ede}. We say that this point is asymptotically stable in probability if it is stable in probability and if for every $\epsilon>0$, there exists $\delta=\delta(\epsilon)>0$ such that if $\|\mathbf{x}(0)\|<\delta$, then
    $$ \mathbb{P}\left(\displaystyle \lim_{t\to\infty} \mathbf{x}(t)=\mathbf{0}\right)\geq 1-\epsilon. $$
\end{definition}

The definition \eqref{def3} intuitively tells us that being asymptotically stable in probability implies that, with high probability, there exists a neighborhood of the origin ($\mathbf{x}=\mathbf{0}$ is an equilibrium point of the system) such that, starting from this neighborhood, the system converges toward extinction.

 Next, we will state a series of well-known theorems in the literature on stochastic differential equations that will be used in the proofs in the main results section. However, we first need some additional definitions.

 \begin{definition}
    Let \(\mathcal{K}\) be the family of all continuous, non-decreasing functions \(\phi: \mathbb{R}_{\geq 0} \rightarrow \mathbb{R}_{\geq 0}\) such that \(\phi(0) = 0\) and \(\phi(r) > 0\) if \(r > 0\). For \(h > 0\), let \(S_h = \{ \mathbf{x} \in \mathbb{R}^n: \|\mathbf{x}\| < h \}\). A continuous function \(V(\mathbf{x}, t)\) defined on \(S_h \times \mathbb{R}_{\geq 0}\) is said to be positive definite (in the sense of Lyapunov) if \(V(\mathbf{0}, t) \equiv 0\) and, for some \(\phi \in \mathcal{K}\),
    \[
    V(\mathbf{x}, t) \geq \phi(\|\mathbf{x}\|) \quad \text{for all } (\mathbf{x}, t) \in S_h \times \mathbb{R}_{\geq 0}.
    \]
    A continuous function \(V(\mathbf{x}, t)\) is said to be negative definite if \(-V\) is positive definite. A nonnegative continuous function \(V(\mathbf{x}, t)\) is said to be decrescent if for some \(\phi \in \mathcal{K}\),
    \[
    V(\mathbf{x}, t) \leq \phi(\|\mathbf{x}\|) \quad \text{for all } (\mathbf{x}, t) \in S_h \times \mathbb{R}_{\geq 0}.
    \]
\end{definition}

 \begin{definition}
    Let \(h \in \mathbb{R}_{>0}\). We denote by \(C^{2,1}(S_h \times \mathbb{R}_{\geq 0}; \mathbb{R}_{\geq 0})\) the space of all nonnegative functions \(V(\mathbf{x}, t)\) defined on \(S_h \times \mathbb{R}_{\geq 0}\) that are twice continuously differentiable in \(\mathbf{x}\) and once continuously differentiable in \(t\).
\end{definition}

\begin{theorem}\label{thm1}
    Let \(\mathbf{x} = \mathbf{0}\) be an equilibrium point of \eqref{ede}. If there exists a positive definite function \(V(\mathbf{x}, t) \in C^{2,1}(S_h \times \mathbb{R}_{\geq 0}; \mathbb{R}_{\geq 0})\) such that
    \[
    \mathcal{L}V(\mathbf{x}, t) \leq 0
    \]
    for all \((\mathbf{x}, t) \in S_h \times \mathbb{R}_{\geq 0}\), then \(\mathbf{x} = \mathbf{0}\) is stable in probability.
\end{theorem}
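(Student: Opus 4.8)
The plan is to adapt the classical stochastic Lyapunov argument (Khasminskii--Mao): bound the probability of ever leaving the ball $S_r$ by exploiting that, under $\mathcal{L}V\le 0$, the process $V(\mathbf{x}(t\wedge\tau),t\wedge\tau)$ is a nonnegative supermartingale, where $\tau$ is the first exit time of the solution from $S_r$.

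First I would fix $\epsilon>0$ and $r>0$ and assume without loss of generality that $r<h$ (else replace $r$ by $\min\{r,h/2\}$). Using the joint continuity of $V$ on $S_h\times\mathbb{R}_{\geq 0}$ together with $V(\mathbf{0},t)\equiv 0$ (in particular $V(\mathbf{0},0)=0$), I would choose $\delta\in(0,r)$ small enough that $\sup_{\|\mathbf{x}\|\le\delta}V(\mathbf{x},0)<\epsilon\,\phi(r)$, where $\phi\in\mathcal{K}$ is the function witnessing the positive definiteness of $V$, so that $\phi(r)>0$. Then fix an initial datum with $\|\mathbf{x}(0)\|<\delta$ and set $\tau:=\inf\{t\ge 0:\|\mathbf{x}(t)\|\ge r\}$; the solution is well defined at least up to $\tau$, and on $[0,\tau]$ it stays in the compact set $\overline{S_r}\subset S_h$, so $V$ and $\mathcal{L}V$ are defined there.

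Next I would apply Itô's formula to $V(\mathbf{x}(t\wedge\tau),t\wedge\tau)$ and take expectations; since the diffusion coefficient is bounded on $\overline{S_r}$, the stochastic-integral term is a true martingale up to $t\wedge\tau$, and Dynkin's formula gives
\[
\mathbb{E}\big[V(\mathbf{x}(t\wedge\tau),\,t\wedge\tau)\big]=V(\mathbf{x}(0),0)+\mathbb{E}\!\int_0^{t\wedge\tau}\!\mathcal{L}V(\mathbf{x}(s),s)\,ds\ \le\ V(\mathbf{x}(0),0),
\]
because $\mathcal{L}V\le 0$ on $S_h\times\mathbb{R}_{\geq 0}$. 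On $\{\tau\le t\}$, path continuity forces $\|\mathbf{x}(\tau)\|=r$, hence $V(\mathbf{x}(\tau),\tau)\ge\phi(r)$; dropping the (nonnegative) contribution on $\{\tau>t\}$ yields $\phi(r)\,\mathbb{P}(\tau\le t)\le\mathbb{E}[V(\mathbf{x}(\tau),\tau)\mathbf{1}_{\{\tau\le t\}}]\le V(\mathbf{x}(0),0)<\epsilon\,\phi(r)$. Dividing by $\phi(r)>0$ gives $\mathbb{P}(\tau\le t)<\epsilon$ for every $t\ge 0$, and letting $t\to\infty$ (monotone convergence) gives $\mathbb{P}(\tau<\infty)\le\epsilon$; since $\{\tau=\infty\}=\{\|\mathbf{x}(t)\|<r\ \forall t\ge 0\}$, this is exactly $\mathbb{P}(\|\mathbf{x}(t)\|<r\ \forall t\ge 0)\ge 1-\epsilon$, i.e. stability in probability.

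The step I expect to demand the most care is the vanishing in expectation of the local-martingale term in Itô's formula: one must use the stopping at $\tau$ (and, if the coefficients are only locally bounded, an auxiliary truncation $\tau\wedge k$ with $k\to\infty$, passing to the limit by Fatou on the left and monotone/dominated convergence on the drift integral) to reduce all integrands to ones supported on $\overline{S_r}$, where continuity of $g$ makes them bounded. The remaining ingredients — the choice of $\delta$ from continuity of $V$ at the origin, the identity $\|\mathbf{x}(\tau)\|=r$ on $\{\tau<\infty\}$ from path continuity, and the final monotone limit in $t$ — are routine.
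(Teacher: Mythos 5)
Your argument is correct and is precisely the classical supermartingale/Dynkin proof given in the reference the paper cites for this theorem (Mao, \emph{Stochastic Differential Equations and Applications}, Theorem 2.2, p.~111); the paper itself offers no independent proof beyond that citation. The choice of $\delta$ via continuity of $V$ at the origin, the stopping at the first exit time from $S_r$, the bound $V(\mathbf{x}(\tau),\tau)\ge\phi(r)$ on $\{\tau\le t\}$, and the final monotone limit in $t$ are all exactly as in that source.
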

\begin{proof}
    See \cite{mao2007stochastic}, page 111.
\end{proof}

Consider the linear system

\begin{equation}\label{lineal}
    dX(t) = BX \, dt + \sum_{r=1}^k \mathbf{c}_r X \, dB_r(t)
\end{equation}
with constant coefficients, that is, \( B, \mathbf{c}_1, \ldots, \mathbf{c}_k \) are constant matrices.

\begin{theorem}\label{thm2}
    If \(\mathbf{x} = \mathbf{0}\) is an equilibrium point of the system \eqref{lineal} that is asymptotically stable in probability, and the coefficients of the system

\begin{equation}\label{ede1}
    dX(t) = f(X(t), t) \, dt + \sum_{r=1}^k g_r(X(t), t) \, dB_r(t)
\end{equation}
satisfy the inequality

\begin{equation}\label{ine1}
    \|f(\mathbf{x}, t) - B\mathbf{x}\| + \sum_{r=1}^k \|g_r(\mathbf{x}, t) - \mathbf{c}_r \mathbf{x}\| < \gamma \|\mathbf{x}\|
\end{equation}
in a sufficiently small neighborhood of the point \(\mathbf{x} = \mathbf{0}\) and for a sufficiently small constant \(\gamma\), then the equilibrium point \(\mathbf{x} = \mathbf{0}\) is asymptotically stable in probability for the system \eqref{ede1}.
\end{theorem}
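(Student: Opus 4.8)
The plan is to prove the theorem by transferring a Lyapunov function from the linear system \eqref{lineal} to the perturbed system \eqref{ede1}. The starting point is a converse Lyapunov lemma for linear It\^o systems: since $\mathbf{x}=\mathbf{0}$ is asymptotically stable in probability for \eqref{lineal}, there exist a constant $\theta\in(0,1)$ and a function $V$, positive definite, $C^{2}$ on $\mathbb{R}^{n}\setminus\{\mathbf{0}\}$, and positively homogeneous of degree $\theta$ (so $V(\lambda\mathbf{x})=\lambda^{\theta}V(\mathbf{x})$ for $\lambda>0$), such that for positive constants $k_{1},\dots,k_{5}$,
\[
k_{1}\|\mathbf{x}\|^{\theta}\le V(\mathbf{x})\le k_{2}\|\mathbf{x}\|^{\theta},\qquad \mathcal{L}_{0}V(\mathbf{x})\le -k_{3}\|\mathbf{x}\|^{\theta},\qquad \|V_{\mathbf{x}}(\mathbf{x})\|\le k_{4}\|\mathbf{x}\|^{\theta-1},\qquad \|V_{\mathbf{x}\mathbf{x}}(\mathbf{x})\|\le k_{5}\|\mathbf{x}\|^{\theta-2},
\]
where $\mathcal{L}_{0}$ is the generator of \eqref{lineal} (the last two bounds are forced by homogeneity of $V$). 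I would then show that this same $V$ works as a Lyapunov function for \eqref{ede1}.

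The second step is the perturbation estimate. Writing $\mathcal{L}$ for the generator of \eqref{ede1},
\[
\mathcal{L}V(\mathbf{x},t)-\mathcal{L}_{0}V(\mathbf{x})=\big(f(\mathbf{x},t)-B\mathbf{x}\big)^{\!\top}V_{\mathbf{x}}(\mathbf{x})+\tfrac12\sum_{r=1}^{k}\Big[g_{r}(\mathbf{x},t)g_{r}(\mathbf{x},t)^{\!\top}-\mathbf{c}_{r}\mathbf{x}\mathbf{x}^{\!\top}\mathbf{c}_{r}^{\!\top}\Big]:V_{\mathbf{x}\mathbf{x}}(\mathbf{x}).
\]
The drift correction is controlled directly by \eqref{ine1}: $\big|(f-B\mathbf{x})^{\!\top}V_{\mathbf{x}}\big|\le \gamma\|\mathbf{x}\|\cdot k_{4}\|\mathbf{x}\|^{\theta-1}=k_{4}\gamma\|\mathbf{x}\|^{\theta}$. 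For the diffusion correction I would use the identity $aa^{\!\top}-cc^{\!\top}=(a-c)a^{\!\top}+c(a-c)^{\!\top}$ together with \eqref{ine1} and the local bound $\|\mathbf{c}_{r}\mathbf{x}\|\le C\|\mathbf{x}\|$, which gives $\|g_{r}(\mathbf{x},t)\|\le (C+\gamma)\|\mathbf{x}\|$ near the origin, hence $\big\|g_{r}g_{r}^{\!\top}-\mathbf{c}_{r}\mathbf{x}\mathbf{x}^{\!\top}\mathbf{c}_{r}^{\!\top}\big\|\le C'\gamma\|\mathbf{x}\|^{2}$; multiplying by $\|V_{\mathbf{x}\mathbf{x}}\|\le k_{5}\|\mathbf{x}\|^{\theta-2}$ produces a $C''\gamma\|\mathbf{x}\|^{\theta}$ estimate. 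Combining, $\mathcal{L}V(\mathbf{x},t)\le(-k_{3}+C_{\ast}\gamma)\|\mathbf{x}\|^{\theta}$ on a punctured neighborhood of $\mathbf{0}$, so taking $\gamma<k_{3}/(2C_{\ast})$ makes $\mathcal{L}V(\mathbf{x},t)\le-\tfrac{k_{3}}{2}\|\mathbf{x}\|^{\theta}$ negative definite there, while the two-sided bound shows $V$ is positive definite and decrescent. Applying the stochastic Lyapunov criterion for asymptotic stability in probability (the asymptotic-stability counterpart of Theorem \ref{thm1}, stated for Lyapunov functions that are $C^{2}$ away from the origin, cf. \cite{mao2007stochastic}) then closes the argument.

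The main obstacle is the first step: producing the homogeneous $C^{2}$ Lyapunov function for \eqref{lineal} with all four estimates simultaneously. This is the genuine content of the theorem and rests on the theory of linear stochastic systems: one shows that asymptotic stability in probability of \eqref{lineal} entails $\mathbb{E}\|X^{\mathbf{x}}(t)\|^{\theta}\to 0$ at an exponential rate for every sufficiently small $\theta>0$, and then constructs $V(\mathbf{x})=\int_{0}^{\infty}\mathbb{E}\,w\!\big(X^{\mathbf{x}}(t)\big)\,dt$ for a suitable smooth weight $w$ homogeneous of degree $\theta$, verifying convergence, smoothness off the origin, homogeneity (inherited from the scaling invariance $X^{\lambda\mathbf{x}}(t)=\lambda X^{\mathbf{x}}(t)$ of the linear flow), and the dissipation identity $\mathcal{L}_{0}V=-w$. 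Once this lemma is available the rest is the routine perturbation bookkeeping outlined above; in the write-up I would either cite Khasminskii for the lemma or isolate it as a preliminary proposition.
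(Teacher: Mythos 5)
The paper offers no proof of this statement---it simply cites Theorem 7.1, p.~228 of \cite{khasminskii2012stochastic}---and your outline is essentially a faithful reconstruction of Khasminskii's own argument there: asymptotic stability in probability of the constant-coefficient linear system yields exponential $p$-stability for small $p$, hence a homogeneous degree-$\theta$ Lyapunov function with the stated derivative bounds, after which the perturbation condition \eqref{ine1} preserves negative definiteness of $\mathcal{L}V$. The one step you correctly flag as the real content (the converse Lyapunov lemma, resting on the negativity of the top Lyapunov exponent and the ergodic angular process) is exactly what you would have to import from Khasminskii, so your plan is sound and matches the cited source.
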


\begin{proof}
    See \cite{khasminskii2012stochastic}, Theorem 7.1, p. 228.
\end{proof}

\begin{theorem}\label{thm3}
    If there exists a positive definite and decrescent function \( V(\mathbf{x}, t) \in C^{2,1}(\mathbb{R}^n \times \mathbb{R}_{\geq 0}, \mathbb{R}_{\geq 0}) \) such that \( \mathcal{L}V(\mathbf{x}, t) \) is negative definite, then the equilibrium point \( \mathbf{x} = \mathbf{0} \) is asymptotically stable in probability.
\end{theorem}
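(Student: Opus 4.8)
The plan is to follow the classical Lyapunov--supermartingale argument (essentially that of \cite{mao2007stochastic,khasminskii2012stochastic}; one could alternatively just cite those references, but here is the route I would take). Since $V$ is positive definite and $\mathcal{L}V$ is negative definite, in particular $\mathcal{L}V\le 0$ on $S_h\times\mathbb{R}_{\ge 0}$ for every $h>0$, so Theorem \ref{thm1} already gives that $\mathbf{x}=\mathbf{0}$ is stable in probability; it remains to upgrade this to asymptotic stability in probability. Because $V$ is positive definite and decrescent and $\mathcal{L}V$ is negative definite, there exist $\phi_1,\phi_2,\phi_3\in\mathcal{K}$ with $\phi_1(\|\mathbf{x}\|)\le V(\mathbf{x},t)\le\phi_2(\|\mathbf{x}\|)$ and $\mathcal{L}V(\mathbf{x},t)\le-\phi_3(\|\mathbf{x}\|)$ for all $(\mathbf{x},t)$ in a neighbourhood of the origin.

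Fix $\epsilon\in(0,1)$. Using stability in probability I would choose $h>0$ (small enough for the above comparison bounds) and then $\delta=\delta(\epsilon,h)>0$ so that $\|\mathbf{x}(0)\|<\delta$ forces the event $\Omega_\epsilon:=\{\|\mathbf{x}(t)\|<h\ \text{for all }t\ge 0\}$ to have probability at least $1-\epsilon$. Let $\tau_h=\inf\{t\ge 0:\|\mathbf{x}(t)\|\ge h\}$. Applying Itô's formula to $V(\mathbf{x}(t\wedge\tau_h),t\wedge\tau_h)$ and using $\mathcal{L}V\le 0$, this process is a nonnegative supermartingale, hence converges almost surely to a finite limit $V_\infty\ge 0$; taking expectations in the Itô expansion and letting $t\to\infty$ (Fatou on the left, monotone convergence on the right) yields $\mathbb{E}\!\int_0^{\tau_h}\phi_3(\|\mathbf{x}(s)\|)\,ds\le V(\mathbf{x}(0),0)<\infty$.

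On $\Omega_\epsilon$ one has $\tau_h=\infty$, so $\int_0^\infty\phi_3(\|\mathbf{x}(s)\|)\,ds<\infty$ almost surely on $\Omega_\epsilon$; since sample paths are continuous and $\phi_3\in\mathcal{K}$ vanishes only at $0$, this forces $\liminf_{t\to\infty}\|\mathbf{x}(t)\|=0$ on $\Omega_\epsilon$. Hence there is a random sequence $t_k\to\infty$ with $\|\mathbf{x}(t_k)\|\to 0$, so $V(\mathbf{x}(t_k),t_k)\le\phi_2(\|\mathbf{x}(t_k)\|)\to 0$; combined with $V(\mathbf{x}(t),t)\to V_\infty$ this gives $V_\infty=0$ on $\Omega_\epsilon$. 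Finally $\phi_1(\|\mathbf{x}(t)\|)\le V(\mathbf{x}(t),t)\to 0$ forces $\|\mathbf{x}(t)\|\to 0$ on $\Omega_\epsilon$, so $\mathbb{P}\big(\lim_{t\to\infty}\mathbf{x}(t)=\mathbf{0}\big)\ge\mathbb{P}(\Omega_\epsilon)\ge 1-\epsilon$. Together with stability in probability this is exactly asymptotic stability in probability in the sense of Definition \ref{def3}.

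The main obstacle is the passage from $\liminf_{t\to\infty}\|\mathbf{x}(t)\|=0$ to $\lim_{t\to\infty}\|\mathbf{x}(t)\|=0$: it is essential that the supermartingale property of $V(\mathbf{x}(t),t)$ genuinely persists up to $t=\infty$, which is why one localises at $\tau_h$ and restricts to the event $\Omega_\epsilon$ on which the solution never leaves $S_h$; one then has to match the almost-sure convergence of $V$ along the whole time axis with the convergence obtained only along the random subsequence $(t_k)$. A secondary technical point is justifying the interchange of limit and expectation that produces the integral bound. Everything else — constructing the class-$\mathcal{K}$ comparison functions and writing out the Itô expansion of $V$ — is routine.
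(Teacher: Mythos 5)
The paper does not prove this statement itself but simply cites Mao (Theorem 2.3, p.~112), and your argument is a correct reconstruction of exactly that standard Lyapunov--supermartingale proof: localising at the exit time $\tau_h$, using the decrescent bound $V\le\phi_2(\|\mathbf{x}\|)$ to upgrade $\liminf_{t\to\infty}\|\mathbf{x}(t)\|=0$ to $V_\infty=0$, and then positive definiteness to force $\|\mathbf{x}(t)\|\to 0$ on $\Omega_\epsilon$. There are no gaps beyond the routine localisation and limit-interchange technicalities you already flag.
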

\begin{proof}
    See \cite{mao2007stochastic}, Theorem 2.3, p. 112.
\end{proof}

\section{Main results}\label{sec:main results}

The techniques to be used in the majority of the proofs rely on applying the generator of the stochastic process associated with \eqref{themodel} (the Lyapunov operator), which has the following structure:

\begin{eqnarray}\label{lyapunovoperator}
    \mathcal{L}&=\frac{\partial}{\partial t}+\left( \beta y -\varphi(x)x- \mu_x x \right)\frac{\partial}{\partial x}+\left(\varphi(x)x-\mu_y y-yz\right)\frac{\partial}{\partial y}+(\alpha yz-\mu_z z)\frac{\partial}{\partial z}\\
    &+\frac{\sigma_x^2}{2}x^2\frac{\partial^2}{\partial x^2}+\frac{\sigma_y^2}{2}y^2\frac{\partial^2}{\partial y^2}+\frac{\sigma_z^2}{2}z^2\frac{\partial^2}{\partial z^2}\nonumber
\end{eqnarray}

whose domain is defined over the space \( C^{2,1}(x, y, z, t) \), that is, the space of functions that are twice continuously differentiable in \( (x, y, z) \in \mathbb{R}^3 \) and once continuously differentiable in \( t \in \mathbb{R} \).

The following theorem guarantees that the solution of the stochastic differential equation does not explode in finite time and remains in the positive cone \(\mathbb{R}^3_{>0} \).

\begin{theorem}[Positivity of the solution]\label{positividad}
    For initial values \((x_0, y_0, z_0) \in \mathbb{R}^3_{> 0}\), the system \eqref{themodel} has a unique solution \((x(t), y(t), z(t))\) for all \(t \in \mathbb{R}_{\geq 0}\), and the solution remains in \(\mathbb{R}^3_{> 0}\) with probability \(1\), that is,
    \[
    \mathbb{P}\left((x(t), y(t), z(t)) \in \mathbb{R}^3_{> 0}, \forall t \in \mathbb{R}_{> 0}\right) = 1.
    \]
\end{theorem}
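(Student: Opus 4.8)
The plan is to follow the classical Khasminskii-style localization argument (as in Mao's monograph), adapted to the fact that the coefficients of \eqref{themodel} are only locally Lipschitz but grow in a controlled way. First I would observe that, since $\varphi$ is differentiable on $(0,\infty)$ and bounded by $1$, the drift and diffusion coefficients of \eqref{themodel} are locally Lipschitz on $\mathbb{R}^3_{>0}$; hence by the standard existence–uniqueness theorem for SDEs there is a unique local solution $(x(t),y(t),z(t))$ on $[0,\tau_e)$, where $\tau_e$ is the explosion time, starting from any $(x_0,y_0,z_0)\in\mathbb{R}^3_{>0}$. The goal is to show $\tau_e=\infty$ almost surely, and that the solution never leaves the open positive octant before $\tau_e$.

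Next I would set up the localizing sequence of stopping times. Pick $k_0$ large enough that $(x_0,y_0,z_0)$ lies in the box $[1/k_0,k_0]^3$, and for $k\ge k_0$ define
\[
\tau_k=\inf\Bigl\{t\in[0,\tau_e): \min\{x(t),y(t),z(t)\}\le \tfrac1k \ \text{ or } \ \max\{x(t),y(t),z(t)\}\ge k\Bigr\}.
\]
Then $\tau_k$ is nondecreasing; set $\tau_\infty=\lim_{k\to\infty}\tau_k\le\tau_e$. It suffices to prove $\tau_\infty=\infty$ a.s., since this simultaneously forces $\tau_e=\infty$ and rules out the solution ever hitting the boundary of $\mathbb{R}^3_{>0}$ in finite time.

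The heart of the argument is constructing a $C^2$ Lyapunov function $V:\mathbb{R}^3_{>0}\to\mathbb{R}_{\ge0}$ that blows up as any coordinate tends to $0$ or $\infty$, and for which $\mathcal{L}V$ is bounded above by $C(1+V)$ for some constant $C$. The natural candidate here, mimicking logarithmic barriers used for Lotka–Volterra–type systems, is something like
\[
V(x,y,z)=\Bigl(x-1-\ln x\Bigr)+\Bigl(y-1-\ln y\Bigr)+\Bigl(\tfrac{z}{\alpha}-\tfrac1\alpha-\tfrac1\alpha\ln z\Bigr),
\]
each summand being nonnegative and proper on $(0,\infty)$. Applying the generator \eqref{lyapunovoperator} and using $0\le\varphi(x)\le1$ to bound the maturation terms (e.g.\ $\varphi(x)x\le x$ and $\varphi(x)x\ge 0$), one computes $\mathcal{L}V$ and checks that the potentially dangerous terms are either linear in $x,y,z$ or constant, while the $-\ln$ contributions produce the $+\tfrac{\sigma_\bullet^2}{2}$ Itô corrections and terms like $+\mu_\bullet$, $-\beta y/x$, etc.; in particular the $yz$ cross terms cancel between the $y$- and $z$-components thanks to the $1/\alpha$ weighting. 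The upshot should be an estimate $\mathcal{L}V\le K$ (or at worst $K(1+V)$) on all of $\mathbb{R}^3_{>0}$ for some constant $K$ depending only on the parameters. Then Dynkin's formula applied to $V$ up to $\tau_k\wedge T$ gives
\[
\mathbb{E}\,V\bigl(x(\tau_k\wedge T),y(\tau_k\wedge T),z(\tau_k\wedge T)\bigr)\le V(x_0,y_0,z_0)+KT,
\]
and on the event $\{\tau_k\le T\}$ the left side is at least $\bigl(\min_{\partial}V\bigr)\to\infty$ as $k\to\infty$ (since on that event some coordinate equals $1/k$ or $k$, making $V$ at least of order $k-1-\ln k$ or $\ln k -1+1/k$). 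Letting $k\to\infty$ forces $\mathbb{P}(\tau_\infty\le T)=0$ for every $T$, hence $\tau_\infty=\infty$ a.s.

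The main obstacle I anticipate is the Lyapunov computation itself: one must verify that every term arising in $\mathcal{L}V$ is genuinely dominated, and the delicate points are (i) the term $-\beta y/x$ coming from $\partial_x(-\ln x)$, which has the wrong sign but is actually $\le 0$ and thus harmless for an upper bound, and (ii) ensuring the $\pm yz$ terms cancel — this is exactly why $z$ must be weighted by $1/\alpha$ in $V$. If a clean global bound $\mathcal{L}V\le K$ is not attainable because of the $\beta y$ term in the $x$-equation, the fallback is to accept $\mathcal{L}V\le K(1+V)$ and invoke the Gronwall-type refinement (as in Mao, Theorem 3.5, Chapter 2) to reach the same conclusion. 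Either way the structure of the proof is unchanged; only the final estimate needs the slightly more careful inequality. I would also remark that differentiability of $\varphi$ is used only to get local Lipschitz continuity of the drift (via the mean value theorem on compact sub-boxes of $(0,\infty)^3$), so no further regularity is needed.
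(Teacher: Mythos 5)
Your proposal is correct and follows essentially the same route as the paper: a unique local solution from local Lipschitz coefficients, the localizing stopping times $\tau_k$, the logarithmic Lyapunov function, the bound $\mathcal{L}V\le c_1V+c_2$ (your anticipated ``fallback'' is in fact what is needed, since the $\beta y$ term rules out a constant bound), Gronwall, and the contradiction from $V\to\infty$ on $\{\tau_k\le T\}$. The only cosmetic difference is that you weight the $z$-term by $1/\alpha$ to cancel the $yz$ cross terms exactly, whereas the paper uses the unweighted sum and simply notes that $-yz+\alpha yz\le 0$ because $\alpha\in(0,1)$.
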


\begin{proof}
The proof is classical in the theory of Itô-type stochastic differential equations.  
Consider the function \newline \( V: \mathbb{R}^3_{>0} \rightarrow \mathbb{R}_{\geq 0} \) defined by:

\[
V(x, y, z) = x - 1 - \ln(x) + y - 1 - \ln(y) + z - 1 - \ln(z).
\]

Then,

\[
\begin{aligned}
\mathcal{L}V(x, y, z) &= \beta y - \varphi(x)x - \mu_x x - \frac{\beta y}{x} + \varphi(x) + \mu_x + \varphi(x)x - \mu_y y - yz \\
&\quad - \frac{\varphi(x)x}{y} + \mu_y + z + \alpha z y - \mu_z z - y + \mu_z + \frac{\sigma_x^2}{2} + \frac{\sigma_y^2}{2} + \frac{\sigma_z^2}{2}.
\end{aligned}
\]

Therefore, for \( (x, y, z) \in \mathbb{R}^3_{>0} \), using that \( \alpha \in (0, 1) \) and that \( \varphi(x) \leq 1 \), we obtain:

\begin{eqnarray*}
    \mathcal{L}V(x,y,z)&\leq& \beta y +z +\frac{1}{2}(\sigma_x^2+\sigma_y^2+\sigma_z^2)+\mu_x+\mu_y+\mu_z+1\\
    & \leq & \max\{1,\beta\}(y+z)+\frac{1}{2}(\sigma_x^2+\sigma_y^2+\sigma_z^2)+\mu_x+\mu_y+\mu_z+1\\
    & \leq & \max\{1,\beta\}(x+y+z)+\frac{1}{2}(\sigma_x^2+\sigma_y^2\sigma_z^2)+\mu_x+\mu_y+\mu_z+1\\
    & \leq & 2\max\{1,\beta\}(V(x,y,z)+3)+\frac{1}{2}(\sigma_x^2+\sigma_y^2+\sigma_z^2)+\mu_x+\mu_y+\mu_z+1\\
    &=& c_1V(x,y,z)+c_2.
\end{eqnarray*}

Where \( c_1 = 2\max\{1, \beta\} \) and \( c_2 = 6\max\{1, \beta\} + \frac{1}{2}(\sigma_x^2 + \sigma_y^2 + \sigma_z^2) + \mu_x + \mu_y + \mu_z + 1 \).

Now, let \( (x_0, y_0, z_0) \in \mathbb{R}^3_+ \) and \( n_0 \in \mathbb{N} \) be sufficiently large such that \( \min\{x_0, y_0, z_0\} > 1/n \) for all \( n \geq n_0 \). Since the coefficients of \eqref{themodel} are locally Lipschitz functions, the solution of \eqref{themodel} starting from \( (x_0, y_0, z_0) \) exists and is unique, up to a stochastic time \( \tau_e \), called the explosion time. Additionally, we will prove that with probability \( 1 \), the solution remains in \( \mathbb{R}^3_{>0} \) with initial condition \( (x_0, y_0, z_0) \in \mathbb{R}^3_{>0} \). We define the following stopping time for all \( n \geq n_0 \):

\[
\tau_n := \inf\{t \in \mathbb{R}_{\geq 0} : (x(t), y(t), z(t)) \notin [1/n, n]^3\}.
\]

Note that \( \tau_n \) is increasing in \( n \), and therefore let \( \tau_\infty := \lim_{n \to \infty} \tau_n \). It is clear that
\[
\tau_\infty \leq \tau_e := \inf\{t \in \mathbb{R}_{\geq 0} : \|(x(t), y(t), z(t))\| = +\infty\}.
\]
It is then sufficient to show that \( \mathbb{P}(\tau_\infty = \infty) = 1 \). 

To proceed, assume by contradiction that \( \mathbb{P}(\tau_\infty = \infty) < 1 \). Then, there exist \( T > 0 \) and \( \eta > 0 \) such that \( \mathbb{P}(\tau_\infty < T) > \eta \). Consequently, there exists an \( n_1 \geq n_0 \) such that
\[
\mathbb{P}(\tau_n < T) > \eta, \quad \text{for all } n \geq n_1.
\]

By applying Itô's formula, one obtains:
\[
dV(x(t), y(t), z(t)) = \mathcal{L}V(x(t), y(t), z(t)) \, dt + \frac{\partial V}{\partial x} \, dB_x(t) + \frac{\partial V}{\partial y} \, dB_y(t) + \frac{\partial V}{\partial z} \, dB_z(t).
\]
It follows that \( V(x(\tau_n \wedge T), y(\tau_n \wedge T), z(\tau_n \wedge T)) \) is equal to
\[
\begin{aligned}
&V(x(0), y(0), z(0)) + \int_0^{\tau_n \wedge T} \mathcal{L}V(x(\tau_n \wedge t), y(\tau_n \wedge t), z(\tau_n \wedge t)) \, dt \\
&+ \int_0^{\tau_n \wedge T} \frac{\partial V}{\partial x} \, dB_x(t) 
+ \int_0^{\tau_n \wedge T} \frac{\partial V}{\partial y} \, dB_y(t) 
+ \int_0^{\tau_n \wedge T} \frac{\partial V}{\partial z} \, dB_z(t).
\end{aligned}
\]

By taking the expected value, and using the fact that \( \mathcal{L}V(x, y, z) \leq c_1 V(x, y, z) + c_2 \), \( V \) is a non-negative function, \( c_1, c_2 \) are positive, and the Itô integral has zero mean, one obtains:
\begin{eqnarray*}   
        \mathbb{E}V(x(\tau_n\wedge T),y(\tau_n\wedge T), (z(\tau_n\wedge T))&\leq & V(x(0),y(0),z(0))+\mathbb{E}\left[ \int_0^{\tau_n\wedge T} (c_1 V(x(t),y(t),z(t))+c_2)dt\right]\\
        & \leq &  V(x(0),y(0),z(0))+ \mathbb{E}\left[ \int_0^{ T} (c_1 V(x(t\wedge \tau_n),y(t\wedge \tau_n),z(t\wedge \tau_n))+c_2)dt\right]\\
        & \leq &  V(x(0),y(0),z(0))+c_2T+c_1\mathbb{E}\left[ \int_0^{ T} V(x(t\wedge \tau_n),y(t\wedge \tau_n),z(t\wedge \tau_n))dt\right]\\
        & \leq &  V(x(0),y(0),z(0))+c_2T+c_1\int_0^T\mathbb{E}V(x(t\wedge \tau_n),y(t\wedge \tau_n),z(t\wedge \tau_n))dt.
\end{eqnarray*}
The last line follows from Fubini's theorem. Then, using Gronwall's inequality, one obtains:

\begin{equation}\label{eq1}
     \mathbb{E}V(x(\tau_n \wedge T), y(\tau_n \wedge T), z(\tau_n \wedge T)) \leq \left[ V(x(0), y(0), z(0)) + c_2 T \right] e^{c_1 T}.
\end{equation}

Let \( h(n) = \min\{1/n - 1 - \ln(1/n), n - 1 - \ln(n)\} \). Using the equality above and the fact that \( V \) is a non-negative function, it follows that:

\begin{equation}\label{eq2}
     \mathbb{E}\left[ V(x(\tau_n \wedge T), y(\tau_n \wedge T), z(\tau_n \wedge T)) \mathbbm{1}_{\{\tau_n \leq T\}} \right] \geq h(n) \mathbb{P}(\tau_n < T) \geq h(n) \eta.
\end{equation}

Inequality \eqref{eq2} implies that the sequence \( \left\{ \mathbb{E}\left[ V(x(\tau_n \wedge T), y(\tau_n \wedge T), z(\tau_n \wedge T)) \mathbbm{1}_{\{\tau_n \leq T\}} \right] \right\}_{n \geq n_1} \) is unbounded above. However, \eqref{eq1} contradicts this. Therefore, \( \mathbb{P}(\tau_\infty = \infty) = 1 \), or equivalently,

\[
\mathbb{P}\left( (x(t), y(t), z(t)) \in \mathbb{R}^3_{> 0}, \forall t \in \mathbb{R}_{\geq 0} \right) = 1.
\]
\end{proof}

\begin{remark}\label{obs1}
   Due to the previous theorem, that is, for every initial condition \( (x_0, y_0, z_0) \in \mathbb{R}^3_{> 0} \), the unique solution of the system \eqref{themodel} remains in \( \mathbb{R}^3_{> 0} \) for all time \( t \geq 0 \), the domain of definition of the function \( V \) in Theorems \eqref{thm1}, \eqref{thm2}, and \eqref{thm3} can be changed from \( \mathbb{R}^3 \) to \( \mathbb{R}^3_{\geq 0} \). To understand why, one must read the proofs of the theorems. However, informally speaking, the proofs use the composition of the function \( V \) with the solution \( (x(t), y(t), z(t)) \), and in this composition the values of \( V \) outside \( \mathbb{R}^3_{\geq 0} \) are irrelevant. Therefore, \( S_h \) can be replaced by \( S_h \cap \mathbb{R}^3_{\geq 0} \).
\end{remark}

\vspace{0.5cm}

\begin{theorem}\label{SP}[Stability in probability of $\mathbf{x} = (0, 0, 0)$.]
   Suppose that in \eqref{ede} condition \( \mu_y \geq \beta \) holds. Then, the equilibrium point \( \mathbf{x} = \mathbf{0} \) is stable in probability.
\end{theorem}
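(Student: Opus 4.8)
The plan is to invoke Theorem~\ref{thm1}: it suffices to produce a positive definite function $V \in C^{2,1}$ with $\mathcal{L}V \leq 0$ on a neighborhood of the origin. First I would record that $\mathbf{0}$ is genuinely an equilibrium point in the sense of Definition~\ref{def1}, which is immediate since both the drift and the diagonal diffusion coefficients of \eqref{themodel} vanish at $(0,0,0)$.

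The Lyapunov function I would try is the simplest possible one, $V(x,y,z) = x + y + z$. It is linear, hence trivially in $C^{2,1}$, and on the nonnegative octant it is positive definite: $V(\mathbf{0}) = 0$ and $V(x,y,z) = x+y+z \geq \sqrt{x^2+y^2+z^2} = \|\mathbf{x}\|$ for $(x,y,z)\in\mathbb{R}^3_{\geq 0}$, so one may take $\phi(r) = r \in \mathcal{K}$. Although $x+y+z$ is not positive definite on a full ball $S_h$, Remark~\ref{obs1} allows us to replace $S_h$ by $S_h \cap \mathbb{R}^3_{\geq 0}$, which is exactly the region the solution never leaves by Theorem~\ref{positividad}.

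Next I would apply the generator \eqref{lyapunovoperator} to $V$. Because $V$ is linear the second-order terms disappear, and because the coefficients of $x$ and $y$ in $V$ are equal the maturation contributions $-\varphi(x)x$ and $+\varphi(x)x$ cancel, leaving
\[
\mathcal{L}V(x,y,z) = -\mu_x x + (\beta - \mu_y)\,y - \mu_z z - (1-\alpha)\,yz .
\]
On $\mathbb{R}^3_{\geq 0}$ the terms $-\mu_x x$, $-\mu_z z$, and $-(1-\alpha)yz$ are all $\leq 0$ (the last one since $\alpha \in (0,1)$), and the hypothesis $\mu_y \geq \beta$ makes $(\beta-\mu_y)y \leq 0$ as well. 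Hence $\mathcal{L}V \leq 0$ on $\mathbb{R}^3_{\geq 0}$, in particular on $S_h \cap \mathbb{R}^3_{\geq 0}$, and Theorem~\ref{thm1} (used through Remark~\ref{obs1}) gives stability in probability of $\mathbf{x}=\mathbf{0}$.

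Once the right $V$ is chosen the argument is a one-line computation, so there is no serious obstacle; the only point needing care is the domain mismatch — Theorem~\ref{thm1} asks for $V$ positive definite on a full neighborhood $S_h$, whereas $x+y+z$ is only positive definite on the nonnegative octant — and that gap is precisely what Remark~\ref{obs1} closes. It is also worth noting that no regularity assumption on $\varphi$ near $0$ is needed here, since the $\varphi(x)x$ contributions cancel identically in $\mathcal{L}V$.
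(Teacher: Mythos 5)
Your proof is correct and follows essentially the same route as the paper: the linear Lyapunov function $V(x,y,z)=x+y+z$, the cancellation of the $\varphi(x)x$ terms, the use of Remark~\ref{obs1} to restrict to $S_h\cap\mathbb{R}^3_{\geq 0}$, and the conclusion via Theorem~\ref{thm1}. In fact your computation of $\mathcal{L}V$ and your choice $\phi(r)=r$ are slightly more careful than the paper's (which contains minor sign/typographical slips in the displayed generator and an inadequate $\phi$), so nothing further is needed.
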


\begin{proof}
    Let \( V(x, y, z, t) = x + y + z \in C^2(\mathbb{R}^3_{\geq 0} \times \mathbb{R}_{\geq 0}, \mathbb{R}_{\geq 0}) \). Note that \( V(0, 0, 0, t) \equiv 0 \). Consider the function \( \phi(r) = \frac{1}{\sqrt{3}} r^{1/2} \) for \( r \geq 0 \). Then, \( \phi \in \mathcal{K} \) and \( V(x, y, z, t) \geq \phi(\|(x, y, z)\|) \) for \( (x, y, z, t) \in \mathbb{R}^3_{> 0} \times \mathbb{R}_{> 0} \). Due to Remark \eqref{obs1}, Theorem \eqref{thm1} can be applied.

The generator of the process \eqref{themodel} is given by \eqref{lyapunovoperator}. Thus,

\[
\mathcal{L}V(x, y, z, t) = \beta y - \mu_x x - \mu_y y - y z - \alpha y z - \mu_z.
\]

Therefore, if \( \beta \leq \mu_y \) and recalling that \( \alpha \in (0, 1) \), it follows that \( \mathcal{L}V \leq 0 \) on \( \mathbb{R}^3_{\geq 0} \times \mathbb{R}_{\geq 0} \). The result follows from Theorem \eqref{thm1}.
\end{proof}

\begin{remark}
    The condition $\mu_y \geq \beta$ in Theorem \ref{SP} is quite reasonable. If the constant $\beta$, which represents the fertility rate, does not exceed the mortality rate of adult prey, then it is expected that if the population density of both juvenile and adult prey starts at low thresholds, it will remain at low thresholds with high probability. Furthermore, in the absence of a sufficient number of adult prey, the population density of predators is also expected to remain at levels close to extinction.  

\end{remark}

\begin{theorem}\label{sap}[Asymptotic stability in probability of $\mathbf{x} = (0, 0, 0)$.]
    Consider the system \eqref{themodel} such that:
    \begin{enumerate}
        \item \( \mu_x > \frac{\sigma_x^2}{2} \) and \( \mu_z > \frac{\sigma_z^2}{2} \), and
        \item \( \left( \frac{\sigma_x^2}{2} - \mu_x \right) \left( \frac{\sigma_y^2}{2} - \mu_y \right) > \frac{\beta^2}{4} \).
    \end{enumerate}
    Then, the equilibrium point of the system \eqref{themodel}, \( \mathbf{x} = (0, 0, 0) \), is asymptotically stable in probability.
\end{theorem}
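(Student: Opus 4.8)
The plan is to use the linearization–perturbation criterion of Theorem \ref{thm2}: exhibit a constant–coefficient linear stochastic system (of the form \eqref{lineal}) that is asymptotically stable in probability and that approximates \eqref{themodel} near the origin in the sense of \eqref{ine1}, establishing the former with Theorem \ref{thm3}; Remark \ref{obs1} lets me check all positivity/definiteness requirements only on $\mathbb{R}^3_{\geq 0}$. \emph{Setting up the comparison system.} The diffusion of \eqref{themodel}, namely $(x,y,z)\mapsto(-\sigma_x x,-\sigma_y y,-\sigma_z z)$, is already linear, hence its own linearization: with $\mathbf{c}_x=\mathrm{diag}(-\sigma_x,0,0)$, $\mathbf{c}_y=\mathrm{diag}(0,-\sigma_y,0)$, $\mathbf{c}_z=\mathrm{diag}(0,0,-\sigma_z)$ one has $g_r(\mathbf{x})-\mathbf{c}_r\mathbf{x}\equiv 0$. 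In the drift, the couplings $-yz$ and $\alpha yz$ are $O(\|\mathbf{x}\|^2)$ and will be absorbed into the remainder of \eqref{ine1}; the only term of linear order but nonlinear in form is the maturation term $\varphi(x)x$. Writing $\varphi_0:=\lim_{x\to 0^+}\varphi(x)$ (assumed to exist), I would take
\[
B=\begin{pmatrix}-\mu_x-\varphi_0 & \beta & 0\\ \varphi_0 & -\mu_y & 0\\ 0 & 0 & -\mu_z\end{pmatrix},
\]
so that $f(\mathbf{x})-B\mathbf{x}=\big(-(\varphi(x)-\varphi_0)x,\ (\varphi(x)-\varphi_0)x-yz,\ \alpha yz\big)$; in the case $\varphi_0=0$, for which hypotheses (1)–(2) are tailored, the $\varphi_0$ entries simply vanish.

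\emph{Stability of the linear comparison system.} For $dX=BX\,dt+\sum_r\mathbf{c}_r X\,dB_r$ I would apply Theorem \ref{thm3} with $V(X)=\tfrac12 X^\top X$, which is positive definite and decrescent (via $\phi(r)=\tfrac12 r^2\in\mathcal{K}$). A direct computation gives $\mathcal{L}V(X)=X^\top M X$ with
\[
M=\tfrac12\big(B+B^\top\big)+\tfrac12\sum_r\mathbf{c}_r^\top\mathbf{c}_r=\begin{pmatrix}\tfrac{\sigma_x^2}{2}-\mu_x-\varphi_0 & \tfrac{\beta+\varphi_0}{2} & 0\\ \tfrac{\beta+\varphi_0}{2} & \tfrac{\sigma_y^2}{2}-\mu_y & 0\\ 0 & 0 & \tfrac{\sigma_z^2}{2}-\mu_z\end{pmatrix}.
\]
By Sylvester's criterion applied to $-M$, negative definiteness of $M$ is equivalent to $\mu_x+\varphi_0>\tfrac{\sigma_x^2}{2}$, to $(\mu_x+\varphi_0-\tfrac{\sigma_x^2}{2})(\mu_y-\tfrac{\sigma_y^2}{2})>\tfrac{(\beta+\varphi_0)^2}{4}$, and to $\mu_z>\tfrac{\sigma_z^2}{2}$, which for $\varphi_0=0$ are precisely hypotheses (1)–(2). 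Hence $\mathcal{L}V(X)\leq-\lambda\|X\|^2$ with $\lambda=-\lambda_{\max}(M)>0$, so $\mathcal{L}V$ is negative definite and Theorem \ref{thm3} shows the linear comparison system is asymptotically stable in probability.

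\emph{Transfer to \eqref{themodel}.} It remains to verify \eqref{ine1}. Since $g_r-\mathbf{c}_r\mathbf{x}\equiv 0$ it suffices to bound, using $|yz|\leq\tfrac12\|\mathbf{x}\|^2$ and $|x|\leq\|\mathbf{x}\|$,
\[
\|f(\mathbf{x})-B\mathbf{x}\|\leq 2|\varphi(x)-\varphi_0|\,|x|+(1+\alpha)|yz|\leq\Big(2|\varphi(x)-\varphi_0|+\tfrac{1+\alpha}{2}\|\mathbf{x}\|\Big)\|\mathbf{x}\|.
\]
By continuity of $\varphi$ at $0$ the bracketed factor tends to $0$ as $\mathbf{x}\to\mathbf 0$, so it is $<\gamma$ on a sufficiently small ball for any prescribed small $\gamma>0$. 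Theorem \ref{thm2} then transfers asymptotic stability in probability to \eqref{themodel}, and Remark \ref{obs1} legitimizes having worked on $\mathbb{R}^3_{\geq 0}$.

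\emph{Main obstacle.} The delicate step is the maturation term: because $\varphi\leq 1$ but $\varphi$ need not vanish at the origin, $\varphi(x)x$ is genuinely $O(\|\mathbf{x}\|)$, not $o(\|\mathbf{x}\|)$, so it cannot be left in the remainder of \eqref{ine1} and must be carried inside the comparison matrix $B$. This is exactly where the hypotheses are used, and it also delimits the natural scope of the statement: conditions (1)–(2) as written correspond to $\varphi_0=\lim_{x\to 0^+}\varphi(x)=0$, whereas for $\varphi_0>0$ one needs (1)–(2) with $\mu_x$ and $\beta$ replaced by $\mu_x+\varphi_0$ and $\beta+\varphi_0$ (and one also needs this one–sided limit to exist, which is not forced by the standing assumptions on $\varphi$). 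Everything else — the spectral computation, Sylvester's criterion, and the elementary bound on $\|f(\mathbf{x})-B\mathbf{x}\|$ — is routine. As an alternative one could bypass Theorem \ref{thm2} and apply Theorem \ref{thm3} directly to \eqref{themodel} with the same $V=\tfrac12\|X\|^2$, the extra work being to dominate the cubic term $\alpha y z^2$ near the origin; the same caveat about $\varphi_0$ then reappears through the term $\varphi(x)xy$ in $\mathcal{L}V$, which enlarges the $xy$–coefficient of the leading quadratic form.
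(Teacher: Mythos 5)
Your proposal follows the same overall strategy as the paper---a constant-coefficient linear comparison system, the Lyapunov function $V=\tfrac12\|X\|^2$, Sylvester's criterion for negative definiteness, and then Theorem \ref{thm2}---but it diverges at exactly the point you single out as the ``main obstacle,'' and there your version is the correct one. The paper's own proof takes $B$ \emph{without} any maturation contribution (its comparison system is $dx=(\beta y-\mu_x x)\,dt-\sigma_x x\,dB_x$, $dy=-\mu_y y\,dt-\sigma_y y\,dB_y$, $dz=-\mu_z z\,dt-\sigma_z z\,dB_z$) and leaves $\varphi(x)x$ entirely in the remainder, bounding the left-hand side of \eqref{ine1} by $6\|\mathbf{x}\|$. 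But Theorem \ref{thm2} requires the constant $\gamma$ to be \emph{sufficiently small}, not merely finite; since $\varphi(x)x$ is genuinely of order $\|\mathbf{x}\|$ unless $\varphi(0^+)=0$, the fixed constant $6$ cannot be shrunk and the hypothesis of Theorem \ref{thm2} is not actually verified in the paper. Your repair---absorbing $\varphi_0=\lim_{x\to0^+}\varphi(x)$ into $B$ so that the remainder $(\varphi(x)-\varphi_0)x$ is $o(\|\mathbf{x}\|)$---is the right one, and your resulting Sylvester conditions show that the theorem as stated is justified only when $\varphi_0=0$; for $\varphi_0>0$ one must replace $\mu_x$ by $\mu_x+\varphi_0$ and $\beta$ by $\beta+\varphi_0$ in conditions (1)--(2). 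This is not cosmetic: the motivating choice $\varphi(x)=\frac{1}{1+x^2}$ has $\varphi_0=1$, and one can choose parameters satisfying (1)--(2) (e.g.\ $\beta=0.5$, $\mu_x=\mu_y=0.3$, small noise) for which the deterministic linearization at the origin, $\bigl(\begin{smallmatrix}-(1+\mu_x) & \beta\\ 1 & -\mu_y\end{smallmatrix}\bigr)$, has negative determinant and hence a positive eigenvalue, so the argument cannot go through and the stated conclusion is in doubt. The only gap on your side is the one you already acknowledge: the standing assumptions on $\varphi$ do not force $\lim_{x\to0^+}\varphi(x)$ to exist, so your statement needs that as an added hypothesis. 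Apart from that, your argument is complete and strictly sounder than the paper's.
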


\begin{proof}
   Consider the following stochastic differential equation:

\begin{align}\label{themodel1}
  dx &= \left(\beta y - \mu_x x\right) dt - \sigma_x x \, dB_x(t), \nonumber \\
  dy &= -\mu_y y \, dt - \sigma_y y \, dB_y(t), \\
  dz &= \mu_z z \, dt - \sigma_z z \, dB_z(t). \nonumber
\end{align}

It will be shown that \( \mathbf{x} = \mathbf{0} \) is asymptotically stable in probability for \eqref{themodel1}. Let

\[
\hat{\mathcal{L}} = \frac{\partial}{\partial t} + \left(\beta y - \mu_x x\right) \frac{\partial}{\partial x} - \mu_y y \frac{\partial}{\partial y} - \mu_z z \frac{\partial}{\partial z} + \frac{\sigma_x^2}{2} \frac{\partial^2}{\partial x^2} + \frac{\sigma_y^2}{2} \frac{\partial^2}{\partial y^2} + \frac{\sigma_z^2}{2} \frac{\partial^2}{\partial z^2}.
\]

Let \( h > 0 \) be arbitrary, and consider the positive-definite function \( V(x, y, z, t) = \frac{1}{2}(x^2 + y^2 + z^2) \) for \( (x, y, z, t) \in S_h \cap \mathbb{R}^3_{\geq 0} \times \mathbb{R}_{\geq 0} \). Now,

\[
\hat{\mathcal{L}}V(x, y, z, t) = -\left(\mu_x - \frac{\sigma_x^2}{2}\right) x^2 - \left(\mu_y - \frac{\sigma_y^2}{2}\right) y^2 - \left(\mu_z - \frac{\sigma_z^2}{2}\right) z^2 + \beta x y.
\]

Thus, \( \hat{\mathcal{L}}V(x, y, z, t) \) can be represented in the form

\[
\hat{\mathcal{L}}V(x, y, z, t) = A x^2 + B x y + C y^2 + D z^2,
\]

where

\begin{itemize}
    \item \( A = -\mu_x + \frac{1}{2} \sigma_x^2 \),
    \item \( B = \beta \),
    \item \( C = -\mu_y + \frac{1}{2} \sigma_y^2 \), and
    \item \( D = -\mu_z + \frac{1}{2} \sigma_z^2 \).
\end{itemize}

Consider the matrix \( Q \) given by:

\[
Q = \left(
\begin{array}{ccc}
   2A  & B & 0  \\
   B & 2C & 0 \\
   0 & 0 & 2D
\end{array}
\right).
\]

Then, \( \hat{\mathcal{L}}V(x, y, z) \) can be represented as the quadratic form \( \frac{1}{2} (x, y, z) Q (x, y, z)^{tr} \). Due to the inequality for symmetric matrices \( M \), \( \lambda_{\min} \|\mathbf{x}\|^2 \leq \mathbf{x} M \mathbf{x}^{tr} \leq \lambda_{\max} \|\mathbf{x}\|^2 \), where \( \lambda_{\min} \) and \( \lambda_{\max} \) are the smallest and largest eigenvalues of \( M \), respectively, it follows that the function \( \hat{\mathcal{L}}V(x, y, z) \) is negative definite in the Lyapunov sense if the matrix \( Q \) is negative definite (all its eigenvalues are negative).

The principal minors of the matrix \( Q \) are

\[
Q_1 = 2A, \quad Q_2 = \left( \begin{array}{cc}  
2A & B \\
B & 2C \\
\end{array}
\right), \quad Q_3 = Q.
\]

Now,

\begin{itemize}
    \item \( \det(Q_1) = 2A < 0 \) since \( \mu_x > \frac{\sigma_x^2}{2} \),
    \item \( \det(Q_2) = 4AC - B^2 > 0 \) since \( \left(\frac{\sigma_x^2}{2} - \mu_x\right) \left(\frac{\sigma_y^2}{2} - \mu_y\right) > \frac{\beta^2}{4} \), and
    \item \( \det(Q_3) = 2D \det(Q_2) < 0 \) since \( \det(Q_2) > 0 \) and \( 2D < 0 \) (because \( \mu_z > \frac{\sigma_z^2}{2} \)).
\end{itemize}

Since the determinants of the principal minors alternate in sign and \( \det(Q_1) < 0 \), the matrix \( Q \) is negative definite.

To use Theorem \ref{thm2}, it remains to prove that the inequality \eqref{ine1} holds for \eqref{themodel} and \eqref{themodel1}, which is equivalent to

\[
|\varphi(x) x| + |\varphi(x) x - y z| + |\alpha y z| \leq c (x^2 + y^2 + z^2)^{1/2}, \quad \forall (x, y, z) \in S_h \cap \mathbb{R}^3_{\geq 0},
\]

for some positive constants \( h, c \). Now, for all \( (x, y, z) \in S_1 \cap \mathbb{R}^3_{\geq 0} \), it holds that:

\[
\begin{aligned}
|\varphi(x) x| + |\varphi(x) x - y z| + |\alpha y z| &\leq 2 \varphi(x) x + (\alpha + 1) y z \\
&\leq 2x + 2y + 2z \\
&\leq 6 (x^2 + y^2 + z^2)^{1/2}.
\end{aligned}
\]

It is concluded, by Theorem \ref{thm2}, that \( \mathbf{x} = (0, 0, 0) \) is an asymptotically stable equilibrium point in probability.
\end{proof}

\begin{remark}
    In Theorem \ref{sap}, the interpretation of the sufficient conditions does not seem very natural. However, if $\sigma_x, \sigma_y$ is sufficiently small, it can be interpreted that if the product of mortality rates satisfies $\mu_x \mu_y > \beta^2 / 2$, then, starting from low population density thresholds, the system converges to extinction with high probability.  

\end{remark}

\begin{theorem}\label{ExtPrey}[Extinction of the prey]
   Let \( (x(t), y(t), z(t)) \) be the solution of \eqref{themodel} with initial condition \( (x_0, y_0, z_0) \in \mathbb{R}^3_{> 0} \). If \( \beta < \min\{\mu_x, \mu_y\} \), then the prey population will go extinct with probability \( 1 \), that is,
   \[
   \lim_{t \to \infty} [x(t) + y(t)] = 0 \quad \text{a.s.}
   \]
\end{theorem}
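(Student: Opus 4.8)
The plan is to work with the combined prey biomass $W(t) := x(t) + y(t)$ and show it converges to $0$ almost surely. First I would add the first two equations of \eqref{themodel} to obtain
\[
dW = \bigl(\beta y - \mu_x x - \mu_y y - yz\bigr)\,dt - \sigma_x x\,dB_x(t) - \sigma_y y\,dB_y(t).
\]
Since $z(t) > 0$ for all $t$ by Theorem \ref{positividad}, the drift is bounded above by $\beta y - \mu_x x - \mu_y y \le -\mu_x x - (\mu_y - \beta) y$. Under the hypothesis $\beta < \min\{\mu_x,\mu_y\}$, set $\delta := \min\{\mu_x,\ \mu_y - \beta\} > 0$; then the drift is $\le -\delta W$. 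So $W$ is dominated (in the drift) by a supermartingale-type comparison, but because of the Itô noise I cannot conclude pathwise decay directly from this alone — the natural route is to pass through the logarithm.

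The key step is to apply Itô's formula to $\ln W(t)$ (valid since $W>0$ a.s.). This gives
\[
d(\ln W) = \frac{1}{W}\,dW - \frac{1}{2W^2}\bigl(\sigma_x^2 x^2\,dt + \sigma_y^2 y^2\,dt\bigr),
\]
so that
\[
\frac{d(\ln W)}{dt}\Big|_{\text{drift}} = \frac{\beta y - \mu_x x - \mu_y y - yz}{W} - \frac{\sigma_x^2 x^2 + \sigma_y^2 y^2}{2W^2} \;\le\; -\delta \;-\; \frac{\sigma_x^2 x^2 + \sigma_y^2 y^2}{2W^2},
\]
and the last term is $\le 0$, so the drift of $\ln W$ is bounded above by $-\delta$. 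Integrating,
\[
\ln W(t) \le \ln W(0) - \delta t + M(t), \qquad M(t) := -\int_0^t \frac{\sigma_x x}{W}\,dB_x(s) - \int_0^t \frac{\sigma_y y}{W}\,dB_y(s).
\]
The integrands $\sigma_x x / W$ and $\sigma_y y / W$ are bounded (by $\sigma_x$ and $\sigma_y$ respectively, since $0 \le x,y \le W$), so $M(t)$ is a martingale with quadratic variation $\langle M\rangle_t \le (\sigma_x^2 + \sigma_y^2)\,t$, growing at most linearly. By the strong law of large numbers for martingales (the Dambis–Dubins–Schwarz time change, or the standard result that $M(t)/t \to 0$ a.s. when $\langle M\rangle_t/t$ is bounded), $M(t)/t \to 0$ a.s. Therefore
\[
\limsup_{t\to\infty} \frac{\ln W(t)}{t} \le -\delta < 0 \quad \text{a.s.},
\]
which forces $W(t) \to 0$ a.s., and since $x,y \ge 0$ this gives $x(t) + y(t) \to 0$ a.s., as claimed.

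The main obstacle I anticipate is the handling of the stochastic term $M(t)$: one must verify carefully that the martingale law of large numbers applies, i.e.\ that $\langle M\rangle_t$ grows at most linearly (which follows from the boundedness of $x/W$ and $y/W$, these being the crucial \emph{a priori} bounds), and invoke the appropriate limit theorem — either Lemma on exponential martingales, the Burkholder–Davis–Gundy inequality combined with Borel–Cantelli along integer times plus a continuity estimate, or directly the strong law for continuous local martingales with $\lim_{t}\langle M\rangle_t/t < \infty$. A minor subtlety is that $W$ could a priori approach $0$ making $1/W$ large, but this does not affect the integrands $x/W, y/W$ which stay in $[0,1]$; the only place positivity of the full solution (Theorem \ref{positividad}) is needed is to ensure $\ln W$ is well defined for all $t$ and that $z(t)>0$ so that the $-yz/W$ term has the right sign. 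Everything else is routine.
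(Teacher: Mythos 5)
Your proposal is correct and follows essentially the same route as the paper: apply Itô's formula to $\ln(x+y)$, bound the drift by a strictly negative constant using $\beta < \min\{\mu_x,\mu_y\}$, and kill the martingale term via the strong law of large numbers for martingales (justified by the boundedness of $x/(x+y)$ and $y/(x+y)$). The only cosmetic difference is your constant $\delta=\min\{\mu_x,\,\mu_y-\beta\}$ versus the paper's $\min\{\mu_x,\mu_y\}-\beta$; both are positive under the hypothesis and yield the same conclusion.
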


\begin{proof}
    Let \( V(x, y, z, t) = \ln(x + y) \). Note that, due to Theorem \eqref{positividad}, the composition of the function \( V \) with the solution of \eqref{themodel} is well defined with probability \( 1 \). Using Itô's formula, one obtains:

\[
\begin{aligned}
dV(x, y, z, t) &\leq \left[ \frac{\beta y - \mu_x x - \mu_y y - y z}{x + y} - \frac{1}{2} \frac{\sigma_x^2 x^2 + \sigma_y^2 y^2}{(x + y)^2} \right] dt \\
&\quad + \frac{\sigma_x x}{x + y} dB_x(t) + \frac{\sigma_y y}{x + y} dB_y(t).
\end{aligned}
\]

Thus,

\[
\begin{aligned}
dV(x, y, z, t) &\leq \left[ \frac{\beta y - \mu_x x - \mu_y y}{x + y} \right] dt + \frac{\sigma_x x}{x + y} dB_x(t) + \frac{\sigma_y y}{x + y} dB_y(t) \\
&\leq [\beta - (\mu_x - \mu_y)] dt + \frac{\sigma_x x}{x + y} dB_x(t) + \frac{\sigma_y y}{x + y} dB_y(t) \\
&\leq (\beta - \min\{\mu_x, \mu_y\}) dt + \frac{\sigma_x x}{x + y} dB_x(t) + \frac{\sigma_y y}{x + y} dB_y(t).
\end{aligned}
\]

Therefore,

\begin{align}\label{eq3}
\ln[x(t) + y(t)] &\leq \ln [x_0 + y_0] + (\beta - \min\{\mu_x, \mu_y\}) t \nonumber \\
&\quad + \int_0^t \frac{\sigma_x x(s)}{x(s) + y(s)} dB_x(s) + \int_0^t \frac{\sigma_y y(s)}{x(s) + y(s)} dB_y(s).
\end{align}

It is known that \( \left\{ \int_0^t \frac{\sigma_x x(s)}{x(s) + y(s)} dB_x(s) \right\}_{t \geq 0} \) and 
\( \left\{ \int_0^t \frac{\sigma_y y(s)}{x(s) + y(s)} dB_y(s) \right\}_{t \geq 0} \) are martingales. Using Itô's isometry, one obtains

\[
\mathbb{E}\left[ \left( \int_0^t \frac{\sigma_x x(s)}{x(s) + y(s)} dB_x(s) \right)^2 \right] = \mathbb{E}\left( \int_0^t \frac{\sigma_x^2 x^2(s)}{(x(s) + y(s))^2} ds \right) \leq \sigma_x^2 t.
\]

Similarly, \( \mathbb{E}\left[ \left( \int_0^t \frac{\sigma_y y(s)}{x(s) + y(s)} dB_y(s) \right)^2 \right] \leq \sigma_y^2 t \). Then, by the strong law of large numbers for martingales,

\[
\lim_{t \to \infty} \frac{1}{t} \int_0^t \frac{\sigma_x x(s)}{x(s) + y(s)} dB_x(s) = 0 \quad \text{a.s.},
\]
and
\[
\lim_{t \to \infty} \frac{1}{t} \int_0^t \frac{\sigma_y y(s)}{x(s) + y(s)} dB_y(s) = 0 \quad \text{a.s.}
\]

From \eqref{eq3}, dividing by \( t \) and taking the limit as \( t \to \infty \), one obtains:

\[
\lim_{t \to \infty} \frac{1}{t} \ln[x(t) + y(t)] \leq \beta - \min\{\mu_x, \mu_y\}.
\]

Thus, if \( \beta < \min\{\mu_x, \mu_y\} \), then \( \lim_{t \to \infty} \frac{1}{t} \ln[x(t) + y(t)] < 0 \) a.s., and it follows that \( \lim_{t \to \infty} (x(t) + y(t)) = 0 \) a.s. In particular, \( \lim_{t \to \infty} x(t) = \lim_{t \to \infty} y(t) = 0 \) a.s.
\end{proof}

\begin{remark}
    The theorem establishes rather natural conditions. If the fertility rate is below the mortality rate of both juvenile and adult prey, then, regardless of the initial population density, extinction will occur with probability 1.  

\end{remark}

\begin{theorem}[Extinction of the predator]
    Let \( \mathcal{E}_y = \left\{ \lim_{t \to \infty} y(t) = 0 \right\} \). Suppose that \( \mathbb{P}(\mathcal{E}_y) > 0 \). Then, \( \mathbb{P}\left( \lim_{t \to 0} z(t) = 0 \mid \mathcal{E}_y \right) = 1 \). That is, given \( \mathcal{E}_y \), the predator goes extinct with probability \( 1 \).
\end{theorem}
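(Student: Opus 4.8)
The plan is to exploit the fact that the third equation of \eqref{themodel} is a \emph{linear} scalar SDE in $z$, which can be solved explicitly; no information about $\varphi$ or about the coupling with $x,y$ is needed beyond the trajectory of $y$. Since Theorem \eqref{positividad} guarantees $z(t)>0$ for all $t\ge 0$ with probability $1$, I would apply It\^o's formula to $\ln z(t)$ and integrate to obtain, almost surely and for all $t\ge 0$,
\[
\ln z(t) = \ln z_0 + \alpha\int_0^t y(s)\,ds - \left(\mu_z + \tfrac{\sigma_z^2}{2}\right)t - \sigma_z B_z(t).
\]
Dividing by $t$ gives
\[
\frac{1}{t}\ln z(t) = \frac{\ln z_0}{t} + \frac{\alpha}{t}\int_0^t y(s)\,ds - \left(\mu_z + \tfrac{\sigma_z^2}{2}\right) - \sigma_z\,\frac{B_z(t)}{t}.
\]

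Next I would examine the right-hand side as $t\to\infty$. The term $\ln z_0/t\to 0$ deterministically. By the strong law of large numbers for Brownian motion, $B_z(t)/t\to 0$ a.s.; denote this full-probability event by $\Omega_0$. On the event $\mathcal{E}_y$ the path $s\mapsto y(s)$ is continuous and converges to $0$, hence is bounded, so the Ces\`aro (Stolz--Ces\`aro) lemma yields $\frac{1}{t}\int_0^t y(s)\,ds\to 0$ pathwise on $\mathcal{E}_y$. Consequently, on $\mathcal{E}_y\cap\Omega_0$,
\[
\limsup_{t\to\infty}\frac{1}{t}\ln z(t) \le -\left(\mu_z + \tfrac{\sigma_z^2}{2}\right) < 0,
\]
which forces $z(t)\to 0$.

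Finally, to deduce the conditional statement: since $\mathbb{P}(\Omega_0)=1$ and $\mathbb{P}(\mathcal{E}_y)>0$, we have $\mathbb{P}(\Omega_0\mid\mathcal{E}_y)=1$, hence $\mathcal{E}_y\subseteq\{\lim_{t\to\infty}z(t)=0\}$ up to a $\mathbb{P}$-null set, which gives $\mathbb{P}\left(\lim_{t\to\infty}z(t)=0\mid\mathcal{E}_y\right)=1$.

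This argument has essentially no hard step; the only points requiring a little care are (i) justifying the explicit logarithmic representation, which rests on the positivity of $z$ from Theorem \eqref{positividad} together with the linearity of the $z$-equation, and (ii) the bookkeeping of the conditioning, namely that an almost-sure event remains almost sure under conditioning on the positive-probability event $\mathcal{E}_y$. I would also note in passing that the full strength of $y(s)\to 0$ is not used: the conclusion persists whenever $\limsup_{t\to\infty}\frac{1}{t}\int_0^t y(s)\,ds < (\mu_z+\sigma_z^2/2)/\alpha$, but the stated hypothesis is the cleanest to invoke.
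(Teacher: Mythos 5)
Your argument is correct, and it takes a genuinely different route from the paper's. The paper fixes $\epsilon\in\bigl(0,(\mu_z+\tfrac{1}{2}\sigma_z^2)/\alpha\bigr)$, introduces the events $\mathcal{E}_{\epsilon,n}=\{\sup_{t\ge n}y(t)\le\epsilon\}$ increasing to a superset of $\mathcal{E}_y$, dominates $z$ on each such event by the geometric Brownian motion $dw=(\alpha\epsilon-\mu_z)w\,dt-\sigma_z w\,dB_z(t)$, invokes the standard criterion $\alpha\epsilon-\mu_z<\tfrac{1}{2}\sigma_z^2$ for $w\to 0$, and then lets $n\to\infty$ and $\epsilon\to 0^+$. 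You instead integrate $d\ln z$ to get the exact pathwise identity $\ln z(t)=\ln z_0+\alpha\int_0^t y(s)\,ds-(\mu_z+\tfrac{1}{2}\sigma_z^2)t-\sigma_z B_z(t)$, which holds on a single almost-sure event independently of any conditioning; combining the Ces\`aro lemma on $\mathcal{E}_y$ with $B_z(t)/t\to 0$ then gives the negative exponential rate directly. This sidesteps two delicate points in the paper's version: the stochastic comparison needed to make the domination $z\le w$ rigorous, and the fact that under the conditional measure $\mathbb{P}(\cdot\mid\mathcal{E}_{\epsilon,n})$ the process $B_z$ need not remain a Brownian motion, since $\mathcal{E}_{\epsilon,n}$ is defined through $y$, which is coupled to $z$. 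Your formulation is also slightly stronger, as you observe: only $\limsup_{t\to\infty}\tfrac{1}{t}\int_0^t y(s)\,ds<(\mu_z+\tfrac{1}{2}\sigma_z^2)/\alpha$ is needed, not $y(t)\to 0$. Both proofs exploit the same mechanism --- the predator's net per-capita growth rate becomes negative once $y$ is small --- but yours packages it as an exact Lyapunov exponent computation rather than a comparison argument, which is cleaner and easier to audit.
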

\begin{proof}
Let \( \epsilon \in \left(0, \frac{\frac{1}{2} \sigma_z^2 + \mu_z}{\alpha}\right) \) be arbitrary and for each \( n \in \mathbb{N} \), define \( \mathcal{E}_{\epsilon, n} := \left\{ \sup_{t \geq n} y(t) \leq \epsilon \right\} \). Since \( \mathbb{P}(\mathcal{E}_y) > 0 \) and \( \mathcal{E}_{\epsilon, n} \uparrow \left\{ \limsup_{t \to \infty} y(t) \leq \epsilon \right\} \supseteq \mathcal{E}_y \), there exists a \( N \in \mathbb{N} \) such that:

\[
\mathbb{P}(\mathcal{E}_{\epsilon, n}) > 0, \quad \forall n \geq N.
\]

Conditioning on the event \( \mathcal{E}_{\epsilon, n} \) for \( n \geq N \), it follows that in the stochastic differential equation \eqref{themodel}, the following holds:

\begin{equation}\label{dominacionesto}
    dz \leq (\alpha \epsilon - \mu_z) z \, dt - \sigma_z z \, dB_z(t), \quad \text{for all } t \geq n.
\end{equation}

It is known that if \( \alpha \epsilon - \mu_z < \frac{1}{2} \sigma_z^2 \), the stochastic differential equation

\[
dw = (\alpha \epsilon w - \mu_z w) \, dt - \sigma_z w \, dB(t),
\]

satisfies that for every initial condition \( w_0 \in \mathbb{R}_{> 0} \), \( \lim_{t \to \infty} w(t) = 0 \) whenever \( \alpha \epsilon - \mu_z < \frac{1}{2} \sigma_z^2 \). From \eqref{dominacionesto}, it follows that \( \lim_{t \to \infty} z(t) = 0 \) for every \( z_0 \in \mathbb{R}_{> 0} \). Thus,

\begin{equation}\label{limite}
\mathbb{P}\left(\lim_{t \to \infty} z(t) = 0 \mid \mathcal{E}_{\epsilon, n}\right) = 1.
\end{equation}

Taking the limit as \( n \to \infty \) in \eqref{limite}, one obtains

\begin{equation}\label{limite2}
    \mathbb{P}\left(\lim_{t \to \infty} z(t) = 0 \mid \limsup_{t \to \infty} y(t) \leq \epsilon\right) = 1.
\end{equation}

Finally, taking the limit as \( \epsilon \to 0^+ \), it is concluded that

\[
\mathbb{P}\left(\lim_{t \to \infty} z(t) = 0 \mid \mathcal{E}_y\right) = 1.
\]
\end{proof}

\begin{corollary}
    If \( \beta < \min\{\mu_x, \mu_y\} \), then \( \lim_{t \to \infty} y(t) = 0 \) almost surely, and consequently, \( \lim_{t \to \infty} z(t) = 0 \) almost surely.
\end{corollary}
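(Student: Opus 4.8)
The plan is to derive this directly from the two preceding extinction results, so the proof will be short and the only thing to watch is the passage from a conditional almost-sure statement to an unconditional one. First I would invoke Theorem \ref{ExtPrey}: since the hypothesis $\beta < \min\{\mu_x,\mu_y\}$ is exactly the hypothesis there, we get $\lim_{t\to\infty}[x(t)+y(t)] = 0$ almost surely, and in particular $\lim_{t\to\infty} y(t) = 0$ almost surely. In the notation of the predator-extinction theorem this says precisely that $\mathbb{P}(\mathcal{E}_y) = 1$, where $\mathcal{E}_y = \{\lim_{t\to\infty} y(t) = 0\}$.

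Next I would apply the Theorem on extinction of the predator. Its hypothesis $\mathbb{P}(\mathcal{E}_y) > 0$ is satisfied (indeed with value $1$), so its conclusion gives
\[
\mathbb{P}\!\left(\lim_{t\to\infty} z(t) = 0 \;\middle|\; \mathcal{E}_y\right) = 1.
\]
Since $\mathbb{P}(\mathcal{E}_y) = 1$, conditioning on $\mathcal{E}_y$ does not change probabilities: for any event $A$, $\mathbb{P}(A \mid \mathcal{E}_y) = \mathbb{P}(A \cap \mathcal{E}_y)/\mathbb{P}(\mathcal{E}_y) = \mathbb{P}(A \cap \mathcal{E}_y) = \mathbb{P}(A)$, the last equality because $\mathbb{P}(\mathcal{E}_y^c) = 0$. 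Taking $A = \{\lim_{t\to\infty} z(t) = 0\}$ yields $\mathbb{P}(\lim_{t\to\infty} z(t) = 0) = 1$, which is the desired conclusion.

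There is essentially no hard step here; the ``main obstacle'' is merely the bookkeeping observation that an event of full measure can be dropped from the conditioning, which is immediate from the definition of conditional probability. I would also remark that this corollary makes the biological picture complete under $\beta < \min\{\mu_x,\mu_y\}$: Theorem \ref{ExtPrey} already gives $x(t),y(t)\to 0$ a.s., and the corollary adds $z(t)\to 0$ a.s., so the whole system converges to the trivial equilibrium $(0,0,0)$ almost surely.
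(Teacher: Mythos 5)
Your argument is correct and is exactly the intended one: the paper states this corollary without proof because it follows immediately from combining Theorem \ref{ExtPrey} (giving $\mathbb{P}(\mathcal{E}_y)=1$ under $\beta<\min\{\mu_x,\mu_y\}$) with the predator-extinction theorem, and your observation that conditioning on a full-measure event can be dropped is the only bookkeeping needed.
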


\subsection{Persistence of the Prey}\label{sec:persistencia}

In this section, the function \( \varphi(x) = \frac{\kappa}{1 + x} \) for \( x \geq 0 \) is considered, where \( \kappa > 0 \). The constant \( \kappa \) can be interpreted as the maturation rate at very low densities. 
In this section, sufficient conditions are provided to ensure that \( \liminf y(t) > 0 \) almost surely in a probability space of the form \( Z_M := \{\sup_{t \geq 0} z(t) \leq M\} \) for some constant \( M > 0 \).

First, some auxiliary results are required. Consider the following system of stochastic differential equations:

\begin{align}\label{aux}
    d\bar{x} &= \left(\beta \bar{y} - \frac{\kappa \bar{x}}{1 + \bar{x}} - \mu_x \bar{x}\right) dt - \sigma_x \bar{x} \, dB_x(t), \\
    d\bar{y} &= \left(\frac{\kappa \bar{x}}{1 + \bar{x}} - (\mu_y + M) \bar{y}\right) dt - \sigma_y \bar{y} \, dB_y(t), \nonumber \\
    d\bar{z} &= -\mu_z \bar{z} \, dt - \sigma_z \bar{z} \, dB_z(t). \nonumber
\end{align}

This system has a generator given by:

\begin{align}
    \bar{\mathcal{L}} := &\left(\beta \bar{y} - \frac{\kappa \bar{x}}{1 + \bar{x}} - \mu_x \bar{x}\right) \frac{\partial}{\partial \bar{x}} + \left(\frac{\kappa \bar{x}}{1 + \bar{x}} - (\mu_y + M) \bar{y}\right) \frac{\partial}{\partial \bar{y}} - \mu_z \bar{z} \frac{\partial}{\partial \bar{z}} \\
    &+ \frac{1}{2} \sigma_x^2 \bar{x}^2 \frac{\partial^2}{\partial \bar{x}^2} + \frac{1}{2} \sigma_y^2 \bar{y}^2 \frac{\partial^2}{\partial \bar{y}^2} + \frac{1}{2} \sigma_z^2 \bar{z}^2 \frac{\partial^2}{\partial \bar{z}^2}. \nonumber
\end{align}

It will be shown that the subsystem of \eqref{aux} given by:

\begin{eqnarray}\label{aux1}
    d\bar{x} &= \left(\beta \bar{y} - \frac{\kappa \bar{x}}{1 + \bar{x}} - \mu_x \bar{x}\right) dt - \sigma_x \bar{x} \, dB_x(t),  \\
    d\bar{y} &= \left(\frac{\kappa \bar{x}}{1 + \bar{x}} - (\mu_y + M) \bar{y}\right) dt - \sigma_y \bar{y} \, dB_y(t), \nonumber
\end{eqnarray}

has a unique stationary distribution with support in \( \mathbb{R}^2_{> 0} \) under certain conditions. To achieve this, it is necessary to show that, under certain conditions, there exists a bounded open set \( \mathcal{U} \subset \bar{U} \subset \mathbb{R}^2_{> 0} \) with a smooth boundary such that the following hold:

\begin{enumerate}
    \item[(i)] There exists a nonnegative function \( V: \mathbb{R}^2_{> 0} \rightarrow \mathbb{R}_{> 0} \) such that \( \mathcal{L}_{sub} V(x, y) < 0 \) for all \( (x, y) \in \mathbb{R}^2_{> 0} \setminus \mathcal{U} \), where \( \mathcal{L}_{sub} \) is the generator of \eqref{aux1}.
    \item[(ii)] There exists \( \kappa > 0 \) such that if \( D \) is the diffusion matrix of \eqref{aux1}, that is, 
    \[
    D = \left(   
    \begin{array}{cc}
       \sigma_x \bar{x} & 0 \\
        0 & \sigma_y \bar{y}
    \end{array}
    \right),
    \]
    then \( (\xi_1, \xi_2) D D^{tr} (\xi_1, \xi_2)^{tr} \geq \kappa (\xi_1^2 + \xi_2^2) \) for all \( (x, y) \in \mathcal{U} \).
\end{enumerate}

The proof of this result can be found in \cite{khasminskii2012stochastic}, p. 99.

\begin{remark}\label{obs}
    The stochastic differential equation \eqref{aux1} satisfies condition (ii), since \[
    (\xi_1, \xi_2) \left(   
    \begin{array}{cc}
       \sigma_x \bar{x} & 0 \\
        0 & \sigma_y \bar{y}
    \end{array}
    \right)
    \left(
    \begin{array}{cc}
       \sigma_x \bar{x} & 0 \\
        0 & \sigma_y \bar{y}
    \end{array}
    \right)
    \left(
    \begin{array}{c}
         \xi_1 \\
         \xi_2
    \end{array}
    \right) = \sigma_x^2 x^2 \xi_1^2 + \sigma_y^2 y^2 \xi_2^2 \geq \kappa (\xi_1^2 + \xi_2^2), \quad \text{for all } (x, y) \in \mathcal{U}.
    \]
\end{remark}

In particular, the existence of a unique stationary distribution implies that for every initial condition \( (x_0, y_0) \in \mathbb{R}^2_{> 0} \), \( \lim_{t \to \infty} (\bar{x}(t), \bar{y}(t)) \overset{d}{=} (X, Y) \), where the random vector \( (X, Y) \) has its support in \( \mathbb{R}^2_{> 0} \). The following conditions are sufficient for the existence of a unique stationary distribution on \( \mathbb{R}^2_{> 0} \) in \eqref{aux1}:

\begin{equation}\label{condiciones}
    \begin{array}{l}
     \frac{1}{2} \sigma_y^2 < \mu_y + M, \\
    \frac{\kappa}{b} + \frac{\mu_x}{b} + \mu_y + M + \frac{\sigma_x^2}{2} + \frac{\sigma_y^2}{2b} < 1.
    \end{array}
\end{equation}

See Theorem \ref{distrestacionaria}.

In event \( Z_M := \{ \sup_{t \geq 0} z(t) \leq M \} \), the stochastic differential equation given by \eqref{ede1} with \( \varphi(x) = \frac{\kappa}{1 + x} \) can be rewritten as:

\begin{align}\label{sistemaequiv}\nonumber
    dx &= \left(\beta y - \frac{\kappa x}{1 + x} - \mu_x x\right) dt - \sigma_x x \, dB_x(t), \\
    dy &= \left(\frac{\kappa x}{1 + x} - \mu_y y - y z \mathbf{1}_{\{z < M\}} \right) dt - \sigma_y y \, dB_y(t), \\
    dz &= \left(\alpha y z - \mu_z z\right) dt - \sigma_z z \, dB_y(t). \nonumber
\end{align}

Using Theorem \ref{comparacion} between \eqref{sistemaequiv} and \eqref{aux} (see the appendix), it follows that if both systems start from the same point \( (x_0, y_0, z_0) \in \mathbb{R}^3_{> 0} \), then

\begin{equation}\label{desigualdad}
    \bar{x}(t) \leq x(t), \quad \bar{y}(t) \leq y(t), \quad \text{and} \quad \bar{z}(t) \leq z(t), \quad \forall t \geq 0 \quad \text{a.s.}
\end{equation}

Therefore, \( \liminf x(t) \geq \lim_{t \to \infty} \bar{x}(t) \overset{d}{=} X \) and \( \liminf y(t) \geq \lim_{t \to \infty} \bar{y}(t) \overset{d}{=} Y \), where the random vector \( (X, Y) \) has the support contained in \( \mathbb{R}^2_{> 0} \). Thus, in space \( Z_M \), conditions \eqref{condiciones} imply persistence of the prey. This result is stated in the following:

\begin{theorem}\label{thm:persistence}
    If the conditions \eqref{condiciones} are satisfied, then 
    \[
    \mathbb{P}(\liminf x(t) > r \mid Z_M) \geq \mathbb{P}(X > r), \quad \forall r > 0,
    \]
    \[
    \mathbb{P}(\liminf y(t) > r \mid Z_M) \geq \mathbb{P}(Y > r), \quad \forall r > 0,
    \]
    where \( (X, Y) \) is a random vector with support in \( \mathbb{R}^2_{> 0} \). In particular,
    \[
    \mathbb{P}(\liminf x(t) > 0 \mid Z_M) = \mathbb{P}(\liminf y(t) > 0 \mid Z_M) = 1.
    \]
\end{theorem}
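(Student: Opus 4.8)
The plan is to use the one-sided comparison already prepared in the paper. Two observations drive it. First, for every realization $z(t)\,\mathbf{1}_{\{z(t)<M\}}\leq M$, so the $\bar y$--equation of \eqref{aux}, in which the predation loss has been replaced by its worst case $(\mu_y+M)\bar y$, has drift no larger than the $y$--drift of \eqref{sistemaequiv}, while the $\bar x$-- and $\bar z$--equations of \eqref{aux} have the same shape as those of \eqref{sistemaequiv}; hence Theorem \ref{comparacion}, applied with the common initial datum $(x_0,y_0,z_0)\in\mathbb{R}^3_{>0}$ and the same Brownian motions, gives \eqref{desigualdad}, that is, $\bar x(t)\leq x(t)$ and $\bar y(t)\leq y(t)$ for all $t\geq 0$ a.s. Second, on the event $Z_M=\{\sup_{t\geq 0}z(t)\leq M\}$ the solution of \eqref{themodel} with $\varphi(x)=\kappa/(1+x)$ coincides a.s.\ with the solution of \eqref{sistemaequiv}, since the indicator $\mathbf{1}_{\{z<M\}}$ only changes the integrand on a Lebesgue--null set of times. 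Positivity of all three coordinates for all time is Theorem \ref{positividad}, so none of these manipulations leaves $\mathbb{R}^3_{>0}$. Consequently, on $Z_M$ the juvenile-- and adult--prey processes dominate $\bar x$ and $\bar y$ pathwise.

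Next I would bring in the ergodicity of the planar subsystem. Under \eqref{condiciones}, Theorem \ref{distrestacionaria} provides a unique stationary distribution $\pi$ for \eqref{aux1} with support in $\mathbb{R}^2_{>0}$ (the non-degeneracy of the diffusion matrix on compact sets required by Khasminskii's criterion is exactly Remark \ref{obs}, and the required Lyapunov/drift condition outside a compact set follows from \eqref{condiciones}). Hence, for every deterministic start, $(\bar x(t),\bar y(t))\overset{d}{\to}(X,Y)\sim\pi$; since $\pi$ is absolutely continuous, $\mathbb{P}(\bar x(t)>r)\to\mathbb{P}(X>r)$ and $\mathbb{P}(\bar y(t)>r)\to\mathbb{P}(Y>r)$ for every $r>0$, and $(X,Y)\in\mathbb{R}^2_{>0}$ a.s.\ gives $\mathbb{P}(X>0)=\mathbb{P}(Y>0)=1$. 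Combining $x(t)\geq\bar x(t)$ on $Z_M$ with this convergence yields the two displayed lower bounds for $\mathbb{P}(\liminf_t x(t)>r\mid Z_M)$ and $\mathbb{P}(\liminf_t y(t)>r\mid Z_M)$; letting $r\downarrow 0$ then produces the ``in particular'' statement. The $x$-- and $y$--assertions are proved by the identical argument applied to the pairs $(x,\bar x)$ and $(y,\bar y)$, so it is enough to treat one of them.

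The step I expect to be the main obstacle is precisely this last ``combining'': upgrading the pathwise inequality $x(t)\geq\bar x(t)$ together with the \emph{distributional} limit $\bar x(t)\overset{d}{\to}X$ into the conditional lower bound for $\liminf_{t\to\infty}x(t)$. One cannot simply write $\liminf_t x(t)\geq\liminf_t\bar x(t)$ and replace the right-hand side by $X$: the dominating diffusion \eqref{aux1} is recurrent and $\pi$ charges every neighbourhood of $\partial\mathbb{R}^2_{>0}$, so $\liminf_t\bar x(t)=0$ a.s.\ and that bound is vacuous. The comparison has to be exploited at the level of the time--$t$ marginals, e.g.\ from $\{x(t)>r\}\supseteq\{\bar x(t)>r\}$ on $Z_M$, passing to the limit in $t$, and then controlling the occupation of $\{x\leq r\}$; the natural tools to try here are the order--preserving (cooperative) structure of \eqref{aux1} --- the $\bar x$--drift is increasing in $\bar y$ with derivative $\beta>0$, the $\bar y$--drift is increasing in $\bar x$ with derivative $\kappa/(1+\bar x)^2>0$, and the noise is diagonal and multiplicative --- together with a stationary copy of \eqref{aux1} coupled through the same Brownian motions. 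Getting from the fixed--time stochastic domination to the genuine $\liminf$ statement, rather than merely identifying $\liminf_t x(t)$ with the limiting law $X$, is the delicate point left implicit in the discussion preceding the theorem, and it is where I would spend most of the effort.
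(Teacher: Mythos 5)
Your proposal follows exactly the paper's route: compare \eqref{sistemaequiv} with the minorant system \eqref{aux} via Theorem \ref{comparacion} (the monotonicity of $x\mapsto \kappa x/(1+x)$ and the bound $z\mathbf{1}_{\{z<M\}}\leq M$ make the drift hypothesis check out), then invoke Theorem \ref{distrestacionaria} for the unique stationary distribution of \eqref{aux1}. Those two steps are sound and are precisely what the paper does. The problem is the step you defer to the end and do not carry out: passing from the pathwise domination $x(t)\geq\bar x(t)$ on $Z_M$ together with $(\bar x(t),\bar y(t))\overset{d}{\to}(X,Y)$ to the asserted bound on $\mathbb{P}(\liminf_t x(t)>r\mid Z_M)$. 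Your diagnosis of why this is delicate is correct and, in my view, decisive: the ergodic minorant is recurrent and its stationary law charges every neighbourhood of the boundary, so $\liminf_t\bar x(t)=0$ a.s.\ and the pathwise inequality yields nothing about $\liminf_t x(t)$; the distributional convergence only controls fixed-time marginals, i.e.\ at best $\liminf_t\mathbb{P}(x(t)>r)\geq\mathbb{P}(X>r)$, which is a statement about persistence in probability, not about the pathwise $\liminf$ appearing in the theorem. There is a second, smaller issue you gloss over in the sentence ``combining \dots yields the two displayed lower bounds'': the convergence $\bar x(t)\overset{d}{\to}X$ holds under $\mathbb{P}$, not under $\mathbb{P}(\cdot\mid Z_M)$, and since $Z_M$ is not independent of the driving Brownian motions, $\mathbb{P}(\bar x(t)>r\mid Z_M)$ need not converge to $\mathbb{P}(X>r)$; the elementary estimate only gives $\mathbb{P}(\bar x(t)>r\mid Z_M)\geq 1-(1-\mathbb{P}(\bar x(t)>r))/\mathbb{P}(Z_M)$, which is strictly weaker unless $\mathbb{P}(Z_M)=1$.

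You should know that the paper itself does not close this gap either: its argument for Theorem \ref{thm:persistence} is the display ``$\liminf x(t)\geq\lim_{t\to\infty}\bar x(t)\overset{d}{=}X$,'' which treats a limit in distribution as if it were an almost-sure limit of the paths --- exactly the conflation you warn against. So your proposal reproduces the published argument, including its unproved step, but has the merit of identifying that step explicitly. To actually establish the theorem as stated one would need something beyond what either you or the paper provides --- for instance a coupling of \eqref{aux1} with a stationary copy exploiting the cooperative structure you mention, or a reformulation of the conclusion in terms of persistence in probability or of time averages (e.g.\ $\liminf_T\frac1T\int_0^T\mathbf{1}_{\{x(t)>r\}}dt$), for which the ergodic theorem for \eqref{aux1} does give a rigorous statement.
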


\begin{remark}
   In the previous theorem, due to the strong Markov property, it is possible to replace $Z_M$ with $Z_M(T):=\sup_{t\geq T}z(t)>M$ for $T>0$, and the conclusion of the theorem remains the same.  
 
\end{remark}

\begin{remark}
   Of course, it would be ideal to present sufficient conditions for the existence of a unique stationary distribution of the model \ref{themodel} over $\mathbb{R}^3_{>0}$. However, this idea could not be fully developed (even in the particular case $\varphi(x)=\frac{\kappa}{1+x}$)) and remains a subject of future work.

\end{remark}

\begin{remark}
  The ideas in the proof of Theorem \ref{thm:persistence} cannot be applied in the case $\varphi(x)=\frac{\kappa }{1+x^2}$, since Theorem \ref{comparacion} cannot be used.

\end{remark}

\section{Simulations }\label{sec:simulaciones}

To illustrate positivity (see Figure \ref{fig:positivity}), the functions $\kappa/(1+x^2)$ and $\kappa/(1+x)$ were chosen as examples. The parameter values were set as follows: $\beta=0.5, \kappa=1, \mu_x=\mu_y=\mu_z=0.1, \alpha=0.2, \sigma_x=\sigma_y=\sigma_z=0.1$.  

It is observed that, despite the relatively large values of $\beta$ and $\kappa$ compared to the mortality rates and the noise intensity terms, the prey population density tends to remain low. The maturation rate at high densities is lower in graph (a), which explains the difference between the graphs given the same parameter sets.

\begin{figure}[h]
    \centering
    \begin{subfigure}{0.45\textwidth}
        \centering
        \includegraphics[width=\linewidth]{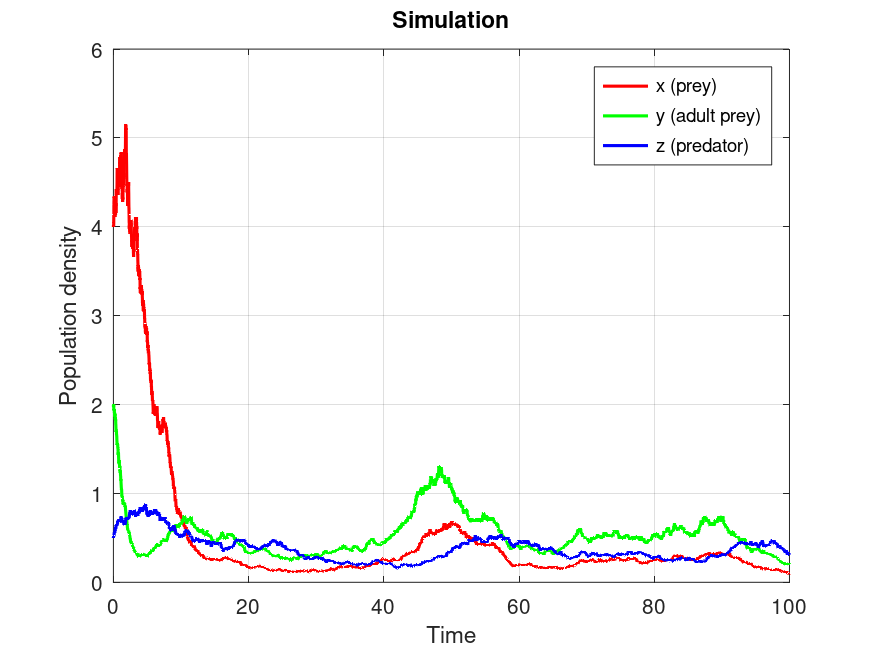}
        \caption{$\varphi(x)=\frac{\kappa }{1+x^2}$}
    \end{subfigure}
    \begin{subfigure}{0.45\textwidth}
        \centering
        \includegraphics[width=\linewidth]{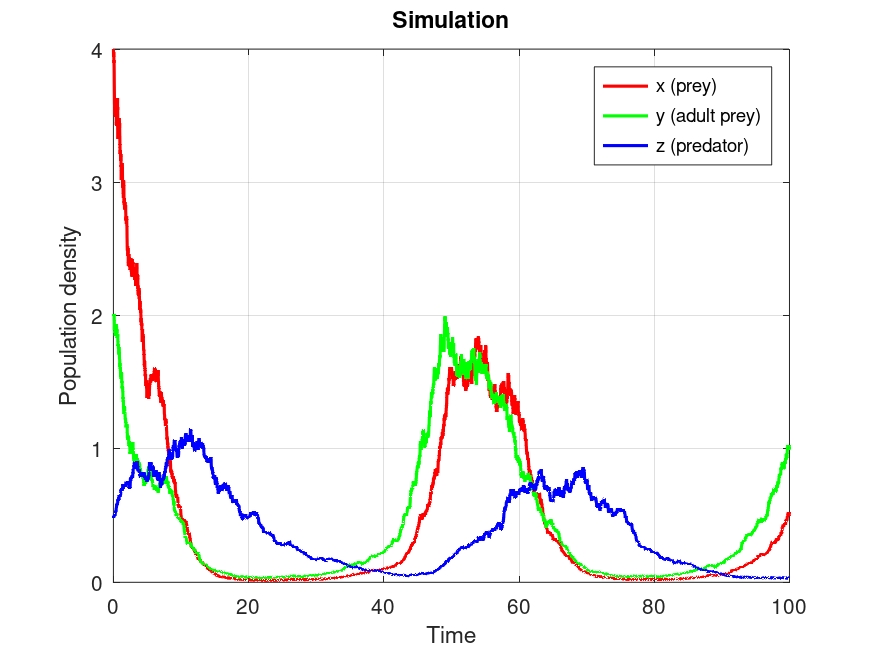}
        \caption{$\varphi(x)=\frac{\kappa}{1+x}$}
    \end{subfigure}
    \caption{Positivity of the solutions.}
    \label{fig:positivity}
\end{figure}

To illustrate the stability in probability of the origin (see Figure \ref{fig:EP}), the following parameter values were used: $\beta=0.05, \kappa=1, \mu_x=\mu_y=\mu_z=0.1, \alpha=0.2, \sigma_x=\sigma_y=\sigma_z=0.1$.  

It is observed that in this case $\mu_y > \beta$ (see Theorem \ref{SP}), which is intuitively clear: if the prey's reproduction rate is lower than its mortality rate, the prey population density will remain low and consequently, the predator will not have enough food available.

\begin{figure}[h]
    \centering
    \begin{subfigure}{0.45\textwidth}
        \centering
        \includegraphics[width=\linewidth]{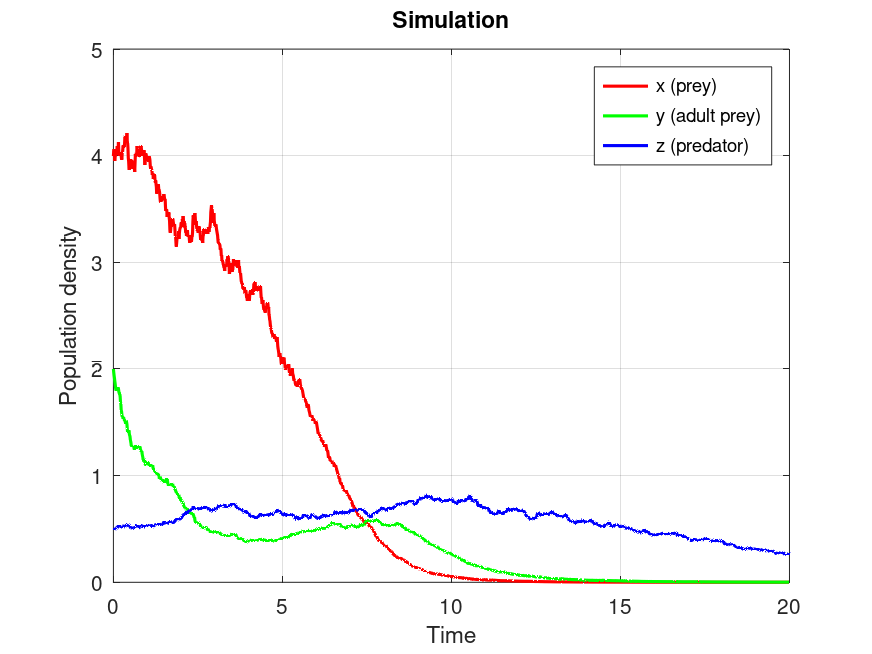}
        \caption{$\varphi(x)=\frac{\kappa }{1+x^2}$}
    \end{subfigure}
    \begin{subfigure}{0.45\textwidth}
        \centering
        \includegraphics[width=\linewidth]{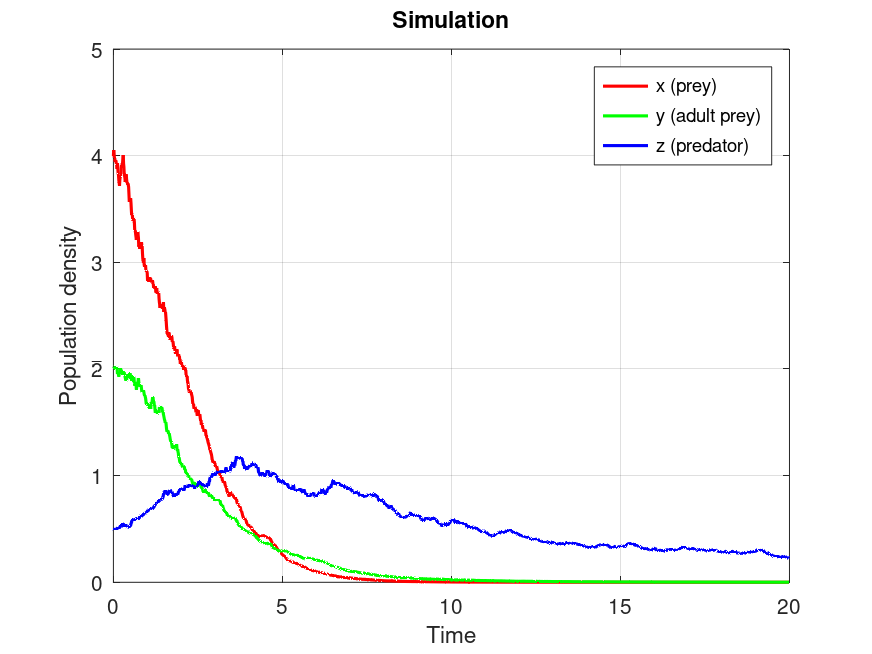}
        \caption{$\varphi(x)=\frac{\kappa}{1+x}$}
    \end{subfigure}
    \caption{Stability in probability of $\mathbf{x}=(0,0,0)$.}
    \label{fig:EP}
\end{figure}

To illustrate the asymptotic stability in probability of the origin (see Figure \ref{fig:EAP}), the following parameter values satisfying the hypotheses of Theorem \ref{sap} are used: \(\beta=0.1\), \(\kappa=1\), \(\mu_x=\mu_y=\mu_z=\sigma_x=\sigma_y=\sigma_z=0.1\), \(\alpha=0.2\). Note that the conditions of the theorem do not involve the parameters \(\alpha\) and \(\kappa\); however, the conditions in Theorem \ref{sap} are not easily interpretable.

\begin{figure}[h]
    \centering
    \begin{subfigure}{0.45\textwidth}
        \centering
        \includegraphics[width=\linewidth]{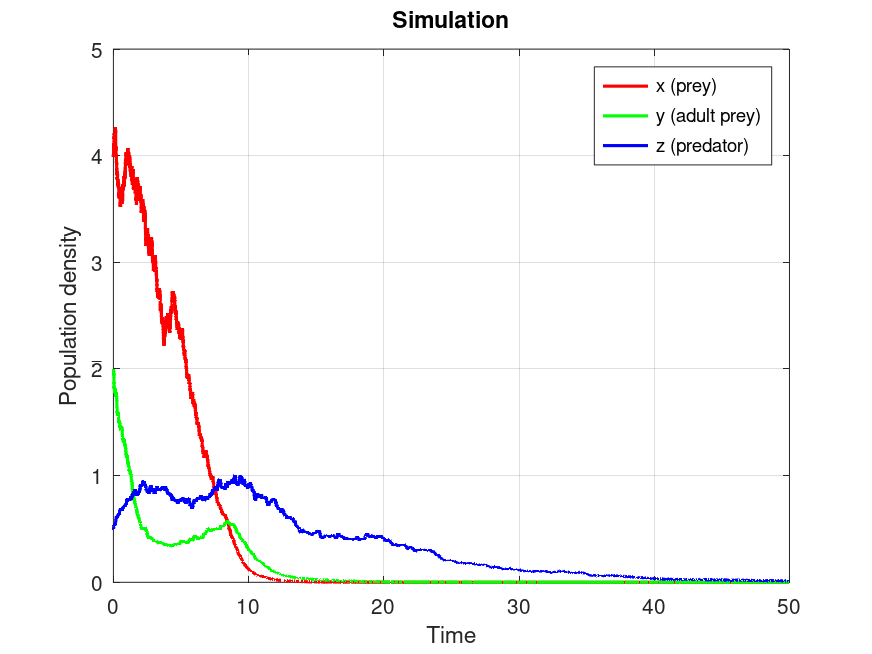}
        \caption{$\varphi(x)=\frac{\kappa }{1+x^2}$}
    \end{subfigure}
    \begin{subfigure}{0.45\textwidth}
        \centering
        \includegraphics[width=\linewidth]{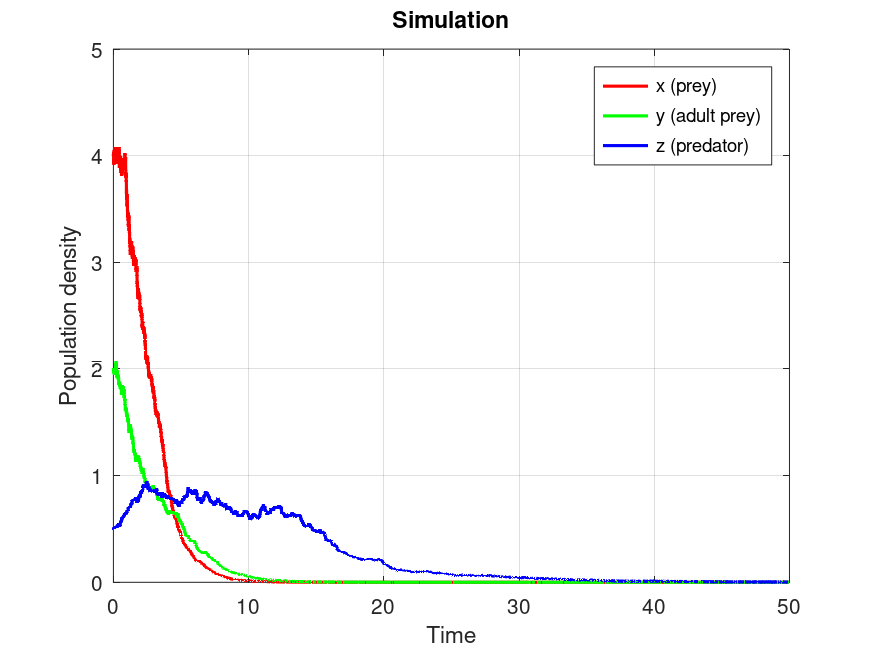}
        \caption{$\varphi(x)=\frac{\kappa}{1+x}$}
    \end{subfigure}
    \caption{Asymptotic stability in probability of $\mathbf{x}=(0,0,0)$.}
    \label{fig:EAP}
\end{figure}

To illustrate the extinction of the prey in the scenario of Theorem \ref{ExtPrey} (see Figure \ref{fig:ExtPrey}), simulations were performed using the following parameters:  
$\beta=0.05, \kappa=1, \mu_x=\mu_y=\mu_z=\sigma_x=\sigma_y=\sigma_z=0.1, \alpha=0.2$.  
In this context, the condition $\beta<\min\{\mu_x,\mu_y\}$ is easily interpretable.

\begin{figure}[h]
    \centering
    \begin{subfigure}{0.45\textwidth}
        \centering
        \includegraphics[width=\linewidth]{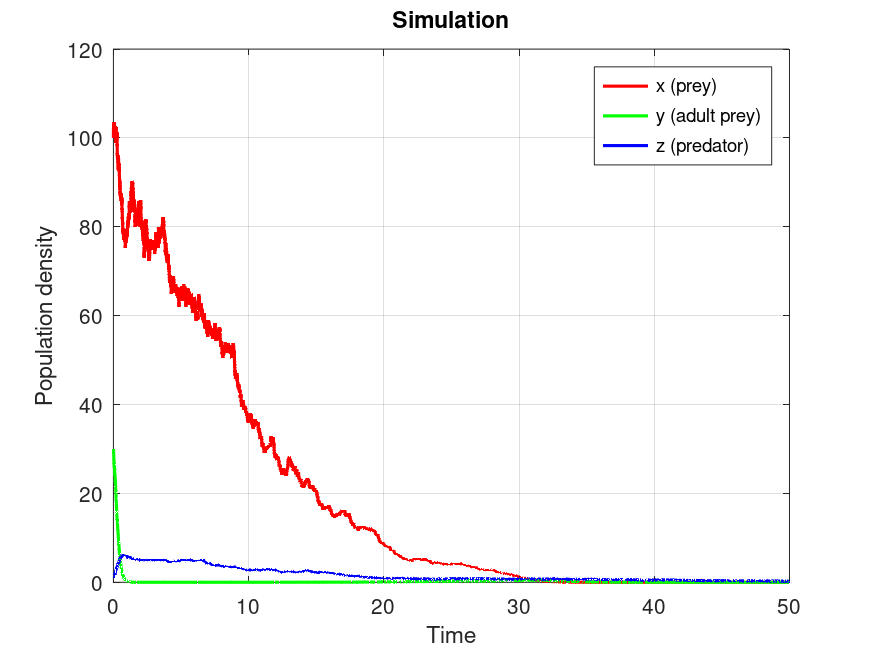}
        \caption{$\varphi(x)=\frac{\kappa }{1+x^2}$}
    \end{subfigure}
    \begin{subfigure}{0.45\textwidth}
        \centering
        \includegraphics[width=\linewidth]{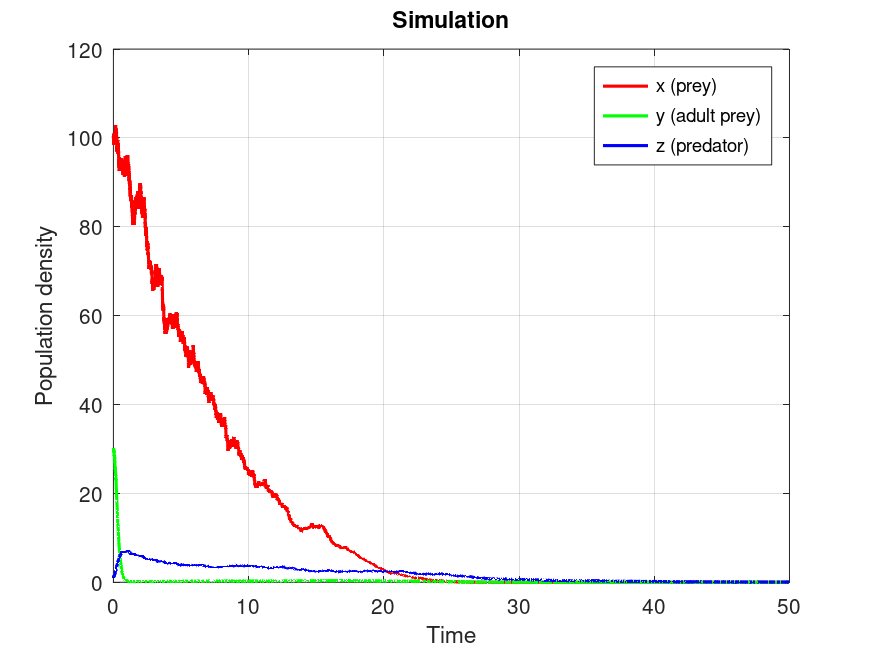}
        \caption{$\varphi(x)=\frac{\kappa}{1+x}$}
    \end{subfigure}
    \caption{Extinction of prey}
    \label{fig:ExtPrey}
\end{figure}

In the following scenario, it is found that despite the predator remaining at low population density thresholds, the prey also remains at very low thresholds. This could indicate the non-persistence of the prey over time. The parameters used were as follows:  
\[
(x_0,y_0,z_0)=(100, 30, 5), \quad \beta=0.1, \quad \kappa=1, \quad \mu_x=\mu_y=\sigma_x=\sigma_y=\sigma_z=0.1, \quad \mu_z=0.6, \quad \alpha=0.7.
\]
Note that for the case \(\varphi(x)=\frac{\kappa}{1+x}\), the conditions of Theorem \ref{thm:persistence} are not satisfied.
  
ANow, in the case $\varphi(x)=\frac{\kappa}{1+x}$ with the following parameters, the conditions of Theorem \ref{thm:persistence} are guaranteed, with $M=0.5$ and $(x_0,y_0,z_0)=(4, 1.5, 0.4)$, $\beta =2$, $\kappa=0.3$, $\mu_x=\mu_y=\sigma_x=\sigma_y=\sigma_z=0.1$, $\mu_z=0.6$, and $\alpha=0.7$. See Figure \ref{fig:persistence}, where the black dashed line corresponds to the value $M=0.5$. This figure suggests the existence of a stationary distribution for model \ref{themodel} under these parameters.

\begin{figure}[h]
    \centering
    \includegraphics[width=0.8\linewidth]{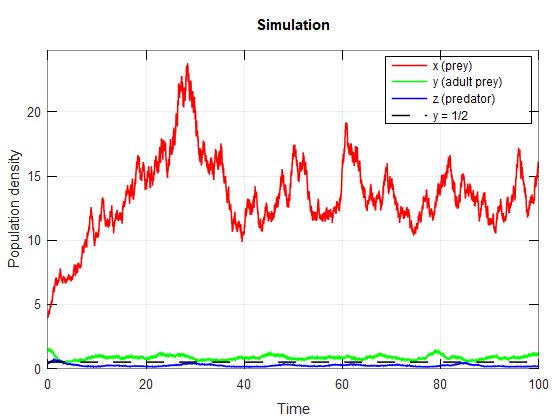} 
    \caption{$\varphi(x)=\frac{\kappa }{1+x}$}
    \label{fig:persistence}
\end{figure}

With this in mind, the histogram of the values of $(x(t),y(t))$ for $t=1000$ is constructed. Indeed, the numerical results suggest the presence of a stationary distribution (see Figure \ref{fig:stationary}).

\begin{figure}[h]
    \centering
    \includegraphics[width=0.8\linewidth]{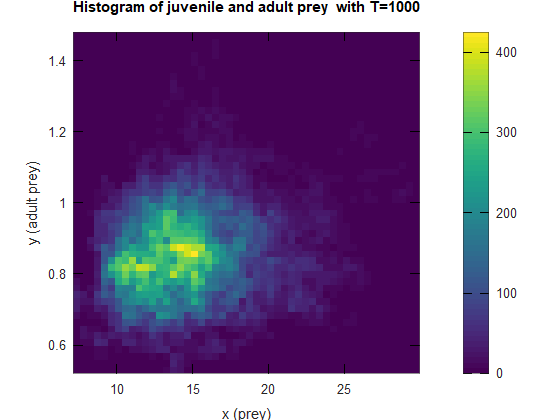} 
    \caption{$\varphi(x)=\frac{\kappa }{1+x}$}
    \label{fig:stationary}
\end{figure}

\section{Appendix}\label{sec:apendice}

The comparison theorem is widely used in the theory of stochastic differential equations (see \cite{Geiss1994}). Below, the preliminaries necessary to state the Comparison Theorem are presented. Let \( (\Omega, \mathcal{F}, \mathbb{P}) \) be a complete probability space, and let \( \{\mathcal{F}_t\}_{t \geq 0} \) be a right-continuous filtration in \( \mathcal{F} \) that contains all \( \mathbb{P} \)-null sets of \( \mathcal{F} \). Consider the following systems of stochastic differential equations defined in the same space:

\begin{equation}\label{X}
    dX_j(t)=a_j(X(t),t)dt+\sum_{k=1}^r\sigma_{jk}(X(t),t)dB_k(t),
\end{equation}
y

\begin{equation}\label{Y}
    dY_j(t)=b_j(Y(t),t)dt+\sum_{k=1}^r\sigma_{jk}(Y(t),t)dB_k(t),
\end{equation}

for \( j = 1, \ldots, d \), \( t \geq 0 \), and \( X_j(0) \leq Y_j(0) \) for \( j = 1, \ldots, d \) almost surely, where \( B = \{(B_1(t), \ldots, B_r(t))\}_{t \geq 0} \) is an \( r \)-dimensional Brownian motion with respect to \( \{\mathcal{F}_t\}_{t \geq 0} \). Let \( X(0) \) and \( Y(0) \) be \( \mathcal{F}_0 \)-measurable random variables. The coefficients \( a = (a_1, \ldots, a_d) \) and \( \sigma = (\sigma_{jk})_{1 \leq j \leq d, 1 \leq k \leq r} \) are assumed to be continuous functions defined in \( \mathbb{R}^d \times \mathbb{R}_{\geq 0} \). Let \( X \) and \( Y \) be the sample paths of local solutions to \eqref{X} and \eqref{Y}, respectively, on \( (\Omega, \mathcal{F}, \mathbb{P}) \) with respect to \( \{\mathcal{F}_t\}_{t \geq 0} \) and \( (B, \{\mathcal{F}_t\}_{t \geq 0}) \). Let \( \tau_X \) and \( \tau_Y \), respectively, be the explosion times.

\begin{theorem}\label{comparacion}
    Assume that the following conditions are satisfied:

\begin{enumerate}
    \item For any \( t \geq 0 \) and \( j = 1, \ldots, d \), it holds that \( a_j(\mathbf{x}, t) \leq b_j(\mathbf{y}, t) \) for any \( x_j = y_j \) and \( x_l \leq y_l \) for all \( l \neq j \).
    \item There exists a strictly increasing function \( \rho: \mathbb{R}_{\geq 0} \rightarrow \mathbb{R}_{\geq 0} \) with \( \rho(0) = 0 \) and \( \int_{\mathbb{R}_{\geq 0}} \rho^{-2}(u) \, du = \infty \) such that for each \( j = 1, \ldots, d \),
    \[
    \sum_{k=1}^r |\sigma_{jk}(\mathbf{x}, t) - \sigma_{jk}(\mathbf{y}, t)| \leq \rho(|x_j - y_j|) \quad \text{for all } t \geq 0, \, \mathbf{x}, \mathbf{y} \in \mathbb{R}^d.
    \]
\end{enumerate}

Then, \( X(0) \leq Y(0) \) almost surely implies \( \mathbb{P}(X(t) \leq Y(t), \, t \in [0, \tau_X \wedge \tau_Y]) = 1 \).
\end{theorem}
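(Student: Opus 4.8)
\section*{Proof proposal for Theorem \ref{comparacion}}

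The plan is to establish the ordering componentwise and simultaneously, via the Yamada--Watanabe regularization of the positive part. Write $Z_j(t) = X_j(t) - Y_j(t)$; the goal is to show $\mathbb{E}\big[\sum_{j=1}^d Z_j(t\wedge\tau)^+\big] = 0$ for every $t$, where $\tau = \tau_X \wedge \tau_Y$. First I would use condition 2 to build the approximating functions. Since $\int_{0^+}\rho^{-2}(u)\,du = \infty$, choose a strictly decreasing sequence $a_0 = 1 > a_1 > \cdots \downarrow 0$ with $\int_{a_n}^{a_{n-1}}\rho^{-2}(u)\,du = n$, and smooth bumps $\varphi_n$ supported in $(a_n, a_{n-1})$ satisfying $0 \le \varphi_n(u) \le \tfrac{2}{n}\rho^{-2}(u)$ and $\int \varphi_n = 1$. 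Setting $\psi_n(x) = \int_0^{x}\!\int_0^{y}\varphi_n(u)\,du\,dy$ for $x>0$ and $\psi_n(x)=0$ for $x\le 0$ yields $\psi_n \in C^2$, $0 \le \psi_n(x) \le x^+$ with $\psi_n(x)\to x^+$, $0\le \psi_n' \le 1$ supported on $\{x>0\}$, and $0 \le \psi_n''(x) \le \tfrac{2}{n}\rho^{-2}(x)\,\mathbf{1}_{\{x>0\}}$.

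Next I would apply It\^o's formula to $\psi_n(Z_j)$, localizing with $\tau_N = \inf\{t : \|X(t)\|\vee\|Y(t)\| \ge N\}\wedge\tau$ so that the paths stay bounded and all stochastic integrals are genuine mean-zero martingales on $[0,t\wedge\tau_N]$. Taking expectations gives
\[
\mathbb{E}\,\psi_n(Z_j(t\wedge\tau_N)) = \mathbb{E}\,\psi_n(Z_j(0)) + \mathbb{E}\!\int_0^{t\wedge\tau_N}\!\psi_n'(Z_j)\big[a_j(X,s)-b_j(Y,s)\big]ds + \tfrac12\mathbb{E}\!\int_0^{t\wedge\tau_N}\!\psi_n''(Z_j)\sum_{k=1}^r\big[\sigma_{jk}(X,s)-\sigma_{jk}(Y,s)\big]^2 ds.
\]
Since $Z_j(0)\le 0$ a.s., the first term vanishes. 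For the diffusion term I would use $\sum_k c_k^2 \le (\sum_k|c_k|)^2$ together with condition 2, giving $\sum_k[\sigma_{jk}(X,s)-\sigma_{jk}(Y,s)]^2 \le \rho(|Z_j|)^2$; combined with the bound on $\psi_n''$ and its support in $\{Z_j>0\}$ (where $|Z_j|=Z_j$), the whole diffusion integrand is $\le \tfrac1n$, so that contribution is $\le t/n \to 0$.

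The heart of the argument, and the step I expect to be the main obstacle, is the drift term, because condition 1 only orders $a_j$ against $b_j$ when the $j$-th coordinates coincide and the remaining coordinates are correctly ordered, whereas on $\{Z_j>0\}$ we know $X_j>Y_j$ and the ordering of the other coordinates is exactly what we are proving. I would resolve this by the componentwise maximum $W(s) = X(s)\vee Y(s)$, i.e.\ $W_l = \max(X_l,Y_l)$. On $\{Z_j>0\}$ one has $W_j = X_j$ and $W_l \ge X_l$ for all $l$, so condition 1 applies with $\mathbf{x}=X$, $\mathbf{y}=W$ and yields $a_j(X,s)\le b_j(W,s)$. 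Writing
\[
a_j(X,s)-b_j(Y,s) = \big[a_j(X,s)-b_j(W,s)\big] + \big[b_j(W,s)-b_j(Y,s)\big],
\]
the first bracket is $\le 0$, while for the second I would note $W_l - Y_l = Z_l^+$ for every $l$ on $\{Z_j>0\}$, so that local Lipschitz continuity of $b_j$ (which holds for the coefficients of the systems \eqref{X} and \eqref{Y} to which the theorem is applied; in the merely continuous case this control is precisely the technical content of \cite{Geiss1994}) bounds it by $L_N\sum_{l=1}^d Z_l(s)^+$ on $[0,\tau_N]$. Since $0 \le \psi_n' \le 1$ is supported on $\{Z_j>0\}$, the drift integrand is $\le L_N\sum_l Z_l(s)^+$.

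Finally I would close the argument. Summing the displayed identity over $j$, letting $n\to\infty$ (so $\psi_n(Z_j)\to Z_j^+$ by dominated convergence, the paths being bounded on $[0,\tau_N]$, and the diffusion contribution vanishing) gives, with $G(t) := \mathbb{E}\sum_{j=1}^d Z_j(t\wedge\tau_N)^+$, the inequality $G(t) \le dL_N\int_0^t G(s)\,ds$. Gronwall's inequality forces $G\equiv 0$, hence $X_j(t\wedge\tau_N)\le Y_j(t\wedge\tau_N)$ a.s.\ for all $j$ and all $t$. Letting $N\to\infty$, so that $\tau_N\uparrow\tau_X\wedge\tau_Y$, and invoking path continuity yields $\mathbb{P}(X(t)\le Y(t),\ t\in[0,\tau_X\wedge\tau_Y])=1$, as claimed.
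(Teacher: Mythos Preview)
The paper does not actually prove this theorem: its ``proof'' is a one-line citation to Theorems~1.1 and~1.2 of \cite{Geiss1994}. So there is no argument on the paper's side to compare against; what you have written is in fact a fleshed-out sketch of the very method underlying that reference.

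Your argument is sound. The Yamada--Watanabe mollification of $x\mapsto x^+$, the It\^o expansion of $\psi_n(Z_j)$, the estimate of the second-order term via condition~2 and $\psi_n''\le \tfrac{2}{n}\rho^{-2}$, and the closing Gronwall step are all standard and correctly executed. The key step---handling the drift on $\{Z_j>0\}$ by comparing through the componentwise maximum $W=X\vee Y$, so that $W_j=X_j$ and $X_l\le W_l$ bring condition~1 into play and leave only $b_j(W,\cdot)-b_j(Y,\cdot)$ controlled by $\sum_l Z_l^+$---is exactly the right idea and is the device used in \cite{Geiss1994}.

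The one honest gap, which you yourself flag, is that bounding $b_j(W,s)-b_j(Y,s)$ by $L_N\sum_l Z_l^+$ uses local Lipschitz continuity of $b$, whereas the theorem as stated in the paper (following \cite{Geiss1994}) assumes only continuity. For the purposes of this paper that is harmless, since the drifts in \eqref{aux} and \eqref{sistemaequiv} are smooth, hence locally Lipschitz, and your proof goes through verbatim in the application. For the general continuous case the cited reference replaces the Lipschitz/Gronwall closure by a somewhat more delicate argument; your parenthetical remark pointing to \cite{Geiss1994} for that refinement is appropriate.
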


\begin{proof}
    See Theorem 1.1 and Theorem 1.2 in \cite{Geiss1994}.
\end{proof}
\begin{theorem}\label{distrestacionaria}
    The stochastic differential equation \eqref{aux1} has a unique stationary distribution on \( \mathbb{R}^2_{> 0} \) if the following conditions are satisfied:

\begin{enumerate}
    \item \( \frac{1}{2} \sigma_y^2 < \mu_y + M \),
    \item \( \frac{\kappa}{b} + \frac{\mu_x}{b} + (\mu_y + M) + \frac{\sigma_x^2}{2} + \frac{\sigma_y^2}{2\beta} < 1 \).
\end{enumerate}
\end{theorem}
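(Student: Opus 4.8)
The plan is to verify the two sufficient conditions $(i)$ and $(ii)$ (from the Khasminskii criterion quoted after system \eqref{aux1}) for the existence of a unique stationary distribution of \eqref{aux1} on $\mathbb{R}^2_{>0}$. Condition $(ii)$ is already handled in Remark \ref{obs}: the diffusion matrix is diagonal with entries $\sigma_x \bar x$ and $\sigma_y \bar y$, so on any bounded open set $\mathcal{U}$ with $\bar{\mathcal{U}} \subset \mathbb{R}^2_{>0}$ the quadratic form $\sigma_x^2 x^2 \xi_1^2 + \sigma_y^2 y^2 \xi_2^2$ is bounded below by $\kappa(\xi_1^2+\xi_2^2)$ with $\kappa := \min_{(x,y)\in\bar{\mathcal U}} \{\sigma_x^2 x^2, \sigma_y^2 y^2\} > 0$. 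So the whole proof reduces to exhibiting a Lyapunov function $V \colon \mathbb{R}^2_{>0} \to \mathbb{R}_{>0}$ and a bounded open set $\mathcal{U}$ with $\bar{\mathcal U} \subset \mathbb{R}^2_{>0}$ such that $\mathcal{L}_{sub} V(x,y) < 0$ for all $(x,y) \in \mathbb{R}^2_{>0} \setminus \mathcal{U}$, where $\mathcal{L}_{sub}$ is the generator of \eqref{aux1} obtained by dropping the $\bar z$-terms from $\bar{\mathcal L}$.

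The candidate I would try is a combination of logarithmic and linear terms, e.g.\ $V(x,y) = \big(x + \tfrac{1}{b} y\big) - \theta \ln x - \ln y + C$ for suitable constants $\theta, b, C>0$ chosen to make $V$ bounded below by a positive constant on $\mathbb{R}^2_{>0}$ (the linear part controls large $x,y$ via the negative mortality drifts, while the $-\ln x$ and $-\ln y$ terms blow up as $x\to 0$ or $y\to 0$, which is what forces $\mathcal{L}_{sub}V<0$ near the axes). Applying $\mathcal{L}_{sub}$ gives, schematically,
\begin{align*}
\mathcal{L}_{sub}V(x,y) &= \Big(\beta y - \tfrac{\kappa x}{1+x} - \mu_x x\Big)\Big(1 - \tfrac{\theta}{x}\Big) + \tfrac{1}{b}\Big(\tfrac{\kappa x}{1+x} - (\mu_y+M)y\Big) - \tfrac{1}{y}\Big(\tfrac{\kappa x}{1+x} - (\mu_y+M)y\Big) \\
&\quad + \tfrac{\theta \sigma_x^2}{2} + \tfrac{\sigma_y^2}{2}.
\end{align*}
The negative terms $-\mu_x x$ and $-\tfrac{\mu_y+M}{b} y$ dominate for large $\|(x,y)\|$; the term $\tfrac{\theta}{x}\big(\tfrac{\kappa x}{1+x}+\mu_x x\big)$ stays bounded as $x\to 0$ but $\tfrac{\theta \beta y}{x}$ could be problematic, so I would instead drop the $-\theta\ln x$ piece and rely on $-\ln y$ together with the linear term; as $y \to 0$ the term $+\tfrac{\kappa x}{(1+x)y} \to +\infty$ unless $x\to 0$ too, so one also needs, say, $-\ln x$, and the cross interactions must be bounded using $\tfrac{x}{1+x}\le \min\{1,x\}$ and Young's inequality $\beta y/x \le \tfrac{\beta}{2}(x + y^2/x \cdots)$ — this balancing is where condition~2 of the theorem, $\tfrac{\kappa}{b} + \tfrac{\mu_x}{b} + (\mu_y+M) + \tfrac{\sigma_x^2}{2} + \tfrac{\sigma_y^2}{2\beta} < 1$, should appear as exactly the inequality needed to make the constant part of $\mathcal{L}_{sub}V$ strictly negative after the unbounded good terms are discarded, while condition~1, $\tfrac12\sigma_y^2 < \mu_y + M$, is what guarantees the $y$-marginal drift is genuinely dissipative (needed so the $-\ln y$ term contributes a net negative amount near $y=0$ and the linear $y$ term near $y=\infty$).

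The main obstacle I anticipate is the careful choice of the weights $\theta$ and $b$ (with $b$ presumably the same constant $b$ appearing in the hypotheses) and the handling of the coupling term $\beta y$ in the $\bar x$-equation: because reproduction feeds $x$ from $y$, the drift of $x$ near $x=0$ is not sign-definite, so one cannot simply read off a one-dimensional Lyapunov estimate coordinate by coordinate. I would resolve this by working with the combined quantity $x + \tfrac{1}{b}y$, for which the $\beta y$ and $\tfrac{\kappa x}{1+x}$ fluxes partially cancel, leaving $\mathcal{L}_{sub}(x+\tfrac1b y) \le \big(\beta - \tfrac{\kappa}{b(1+x)}\big)y + \tfrac{\kappa}{b} - \mu_x x - \tfrac{\mu_y+M}{b} y$, and then adding the logarithmic barriers to compactify the sublevel behavior near the axes. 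Once $\mathcal{L}_{sub}V$ is shown to be $\le -1$ (say) outside a large enough compact set $\mathcal{U} \Subset \mathbb{R}^2_{>0}$, conditions $(i)$ and $(ii)$ are both met and the cited result from \cite{khasminskii2012stochastic}, p.~99, yields the unique stationary distribution on $\mathbb{R}^2_{>0}$.
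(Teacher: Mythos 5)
Your overall strategy coincides with the paper's: apply the Khasminskii criterion from \cite{khasminskii2012stochastic}, dispose of the nondegeneracy condition (ii) via Remark \ref{obs}, and build a Lyapunov function for \eqref{aux1} out of linear parts plus logarithmic barriers, checking negativity of $\mathcal{L}_{sub}V$ separately near each axis and at infinity. However, what you have written is a plan rather than a proof---the decisive computations are deferred (``this balancing is where condition~2 should appear'')---and the candidate function you propose has a genuine defect that the deferral hides. Condition~2 of the theorem forces $\gamma := \mu_y + M < 1$, so for any linear combination $ax + cy$ the drift of \eqref{aux1} contributes the term $(a\beta - c\gamma)\,y$, and with the weights you suggest ($a=1$, $c=1/b$, or the paper's $a=1/\beta$, $c=1$) this coefficient is positive; no hypothesis of the theorem makes it negative. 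Hence a Lyapunov function consisting only of linear and logarithmic pieces does not yield $\mathcal{L}_{sub}V\le -c$ in the region where $y$ is large, and that case cannot be closed. (Your intermediate bound $\mathcal{L}_{sub}\bigl(x+\tfrac1b y\bigr)\le \bigl(\beta-\tfrac{\kappa}{b(1+x)}\bigr)y+\cdots$ is also miscomputed: the maturation flux $\tfrac{\kappa x}{1+x}$ is not proportional to $y$ and cannot be absorbed into the coefficient of $y$.)

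The paper repairs exactly this point by adding a quadratic term: its Lyapunov function is $V(x,y)=\tfrac1\beta(x-\ln x)+y-\ln y+\tfrac12 y^2$. The $\tfrac12 y^2$ piece generates $\tfrac{\kappa x y}{1+x}-\gamma y^2+\tfrac{\sigma_y^2}{2}y^2\le \kappa y-\bigl(\gamma-\tfrac{\sigma_y^2}{2}\bigr)y^2$, and it is condition~1, $\tfrac12\sigma_y^2<\mu_y+M$, that makes this dominate and drive $\mathcal{L}_{sub}V$ to $-\infty$ as $y\to\infty$. Your proposal assigns condition~1 to ``the linear $y$ term near $y=\infty$,'' but a linear term produces no second-derivative contribution under the generator, so $\sigma_y$ cannot enter there; this misattribution is the symptom of the missing quadratic term. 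To complete the argument you would need to include such a superlinear term in $y$, compute $\mathcal{L}_{sub}V$ explicitly, and verify that condition~2 is precisely what makes the surviving constant terms negative in the region near $x=0$, as is done in the paper's four-region case analysis over $\Lambda_1,\dots,\Lambda_4$.
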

\begin{proof}

The theorem in \cite{khasminskii2012stochastic}, p. 99, is used. That is, if \( \mathcal{L}_{sub} \) is the generator of \eqref{aux1}, then:

\begin{eqnarray*}
    \mathcal{L}_{sub} & := & \left(\beta y - \frac{\kappa x}{1 + x} - \mu_x x\right) \frac{\partial}{\partial x} + \left(\frac{\kappa x}{1 + x} - \gamma y \right) \frac{\partial}{\partial y} \\
    & & + \sigma_x x^2 \frac{\partial^2}{\partial x^2} + \sigma_y y^2 \frac{\partial^2}{\partial y^2},
\end{eqnarray*}

where \( \gamma = \mu_y + M \). Then, by Remark \eqref{obs}, it is sufficient to show that there exists a bounded open set \( \mathcal{U} \subset \bar{\mathcal{U}} \subset \mathbb{R}^2_{> 0} \) and a function \( V: \mathbb{R}^2_{> 0} \rightarrow \mathbb{R}_{> 0} \) of class \( C^2 \) such that:
\[
\mathcal{L}_{sub} V(x, y) \leq -c, \quad \forall (x, y) \in \mathbb{R}^2_{> 0} \setminus \mathcal{U},
\]
for some positive constant \( c \). Now, let \( V(x, y) = \frac{1}{\beta}(x - \ln x) + y - \ln y + \frac{1}{2} y^2 \). In this way, it follows that:
\begin{eqnarray*}
    \mathcal{L}_{sub}V(x,y) & = &\frac{1}{\beta} \left(\beta y -\frac{\kappa x}{1+x}-\mu_x x\right)\left(1-\frac{1}{x} \right)+\left( \frac{\kappa x}{1+x}-\gamma y  \right)\left( 1-\frac{1}{y}+y \right)\\
    & & +\frac{\sigma_x^2}{2\beta}+\frac{\sigma_y^2}{2}\\
    &\leq & (1-\gamma)y - \frac{\kappa x}{\beta(1+x)}- \frac{\mu_xx}{\beta}-\frac{y}{x}+\frac{\kappa}{\beta}+\frac{\mu_x}{\beta}+\kappa- \frac{\kappa x}{y(1+x)}+\gamma\\
    & & +\frac{\kappa xy}{1+x}+\left(\frac{\sigma_y^2}{2}-\gamma \right)y^2+\frac{\sigma_x^2}{\beta}+\frac{\sigma_y^2}{2}, \quad \forall (x,y)\in \mathbb{R}^2_{>0}.
\end{eqnarray*} 

Let \( Q = (\epsilon_1, 1/\epsilon_2) \times (\epsilon_3, 1/\epsilon_4) \), where \( \epsilon_i > 0 \) for \( i = 1, 2, 3, 4 \) are constants to be determined. It will be shown that \( \mathcal{L}_{sub} V(x, y) \leq -c \) for all \( (x, y) \in \mathbb{R}^2_{> 0} \setminus Q \), for some positive constant \( c \). This is sufficient, since any open set contained in \( Q \) with a smooth boundary satisfies the required conditions.

The set \( \mathbb{R}^2_{> 0} \setminus Q \) is divided as follows:

\[
\mathbb{R}^2_{> 0} \setminus Q = \bigcup_{i=1}^4 \Lambda_i,
\]

where

\begin{itemize}
    \item \( \Lambda_1 = \{ (x, y) \in \mathbb{R}^2_{> 0} : 0 < x \leq \epsilon_1 \} \),
    \item \( \Lambda_2 = \{ (x, y) \in \mathbb{R}^2_{> 0} : x > 1/\epsilon_2 \} \),
    \item \( \Lambda_3 = \{ (x, y) \in \mathbb{R}^2_{> 0} : 0 < y \leq \epsilon_3 \} \),
    \item \( \Lambda_4 = \{ (x, y) \in \mathbb{R}^2_{> 0} : y \geq 1/\epsilon_4 \} \).
\end{itemize}

\textbf{Case 1:} If \( (x, y) \in \Lambda_1 \), consider the following function:

\[
g(x, y) = (1 - \gamma) y - \frac{y}{x} - \frac{\kappa x}{y(1 + x)} + \frac{\kappa x y}{1 + x} + \left( \frac{\sigma_y^2}{2} - \gamma \right) y^2.
\]

Then,

\[
\mathcal{L}_{sub} V(x, y) \leq g(x, y) + \frac{\kappa}{\beta} + \frac{\mu_x}{\beta} + \kappa + \gamma + \frac{\sigma_y^2}{2} + \frac{\sigma_x^2}{2\beta}, \quad \text{on } \Lambda_1.
\]

After a series of straightforward calculations, we obtain:

$$
\sup_{(x,y)\in \Lambda_1}g(x,y)=-1-\kappa, 
$$

By the conditions in the theorem, \( \gamma < \frac{\sigma_y^2}{2} \) and \( \frac{\kappa}{b} + \frac{\mu_x}{b} + \mu_y + \frac{\sigma_x^2}{2} + \frac{\sigma_y^2}{2\beta} < 1 \). Therefore, there exists a positive constant \( c_1 \) such that for any \( \epsilon_1 > 0 \) (taking \( \epsilon_1 = 1 \) for simplicity), it holds that:

\begin{equation}\label{Lambda1}
    \mathcal{L}_{sub} V(x, y) \leq -c_1, \quad \text{on } \Lambda_1.
\end{equation}

\textbf{Case 2:} If \( (x, y) \in \Lambda_2 \), by condition 1 in the hypotheses of the theorem, we have:

\begin{eqnarray*}
    \mathcal{L}_{sub} V(x, y) &\leq& - \frac{\kappa x}{\beta(1 + x)} - \frac{\mu_x x}{\beta} - \frac{y}{x} - \frac{\kappa x}{y(1 + x)} + \frac{\kappa}{\beta} + \frac{\mu_x}{\beta} + \kappa + \gamma + \frac{\sigma_y^2}{2} + \frac{\sigma_x^2}{2\beta} \\
    && + y \left( 1 - \gamma + \frac{\kappa x}{1 + x} \right) + \left( \frac{\sigma_y^2}{2} - \gamma \right) y^2.
\end{eqnarray*}

Due to condition 1, it follows that \( \lim_{x \to +\infty} \mathcal{L}_{sub} V(x, y) = -\infty \). Therefore, there exist positive constants \( \epsilon_2 \in (0, 1) \) and \( c_2 \) such that:

\begin{equation}\label{Lambda2}
    \mathcal{L}_{sub} V(x, y) \leq -c_2, \quad \text{on } \Lambda_2.
\end{equation}

\textbf{Case 3:} If \( (x, y) \in \Lambda_3 \). From Cases 1 and 2, it follows that we only need to consider the case where \( x \in (1, 1/\epsilon_2) \). Define \( h: \mathbb{R}_{> 0} \rightarrow \mathbb{R} \) by:

\begin{equation*}
    h(y) = (1 - \gamma) y - \frac{y}{1/\epsilon_2} - \frac{\kappa}{y(1 + (1/\epsilon_2))} + \frac{\kappa y}{(1 + (1/\epsilon_2))} + \left( \frac{\sigma_y^2}{2} - \gamma \right) y^2.
\end{equation*}

Then,

\[
h(y) \leq g(x, y), \quad \forall (x, y) \in \Lambda_3 \cap (1, 1/\epsilon_2) \times \mathbb{R}_{> 0}.
\]

Since \( \lim_{y \to 0^+} h(y) = -\infty \), there exist \( \epsilon_3 > 0 \) and a positive constant \( c_3 \) such that:

\begin{equation}\label{Lambda3}
    \mathcal{L}_{sub} V(x, y) \leq -c_3, \quad \text{on } \Lambda_3.
\end{equation}

\textbf{Case 4:} If \( (x, y) \in \Lambda_4 \). Considering only the case where \( x \in (1, 1/\epsilon_2) \) and by condition 1 in the hypotheses, we have:

\begin{eqnarray*}
    \mathcal{L}_{sub} V(x, y) &\leq& -\frac{\kappa x}{\beta(1 + x)} - \frac{\mu_x x}{\beta} - \frac{y}{x} - \frac{\kappa x}{y(1 + x)} + \frac{\kappa}{\beta} + \frac{\mu_x}{\beta} + \kappa + \gamma + \frac{\sigma_y^2}{2} + \frac{\sigma_x^2}{2\beta} \\
    && + y \left( 1 - \gamma + \frac{\kappa x}{1 + x} \right) + \left( \frac{\sigma_y^2}{2} - \gamma \right) y^2.
\end{eqnarray*}

Since the function \( \frac{\kappa x}{1 + x} \) is bounded, it follows that \( \lim_{y \to +\infty} \mathcal{L}_{sub} V(x, y) = -\infty \). Therefore, there exist \( \epsilon_4 \in (0, 1) \) and \( c_4 > 0 \) such that:

\begin{equation}\label{Lambda4}
    \mathcal{L}_{sub} V(x, y) \leq -c_4, \quad \text{on } \Lambda_4.
\end{equation}

From \eqref{Lambda1}, \eqref{Lambda2}, \eqref{Lambda3}, and \eqref{Lambda4}, it follows that if \( c = \min_{1 \leq i \leq 4} c_i \), then:

\[
\mathcal{L}_{sub} V(x, y) \leq -c, \quad \forall (x, y) \in \mathbb{R}^2_{> 0} \setminus Q.
\]

\end{proof}

\bibliographystyle{plain}  
\bibliography{references}

\end{document}